\documentclass[english]{IEEEtran}
\usepackage[T1]{fontenc}
\usepackage[latin9]{inputenc}
\usepackage{xcolor}
\usepackage{array}
\usepackage{booktabs}
\usepackage{amsmath}
\usepackage{amsthm}
\usepackage{amssymb}
\usepackage{stmaryrd}
\usepackage{graphicx}
\usepackage{setspace}
\usepackage{wasysym}

\makeatletter

\newcommand*\LyXZeroWidthSpace{\hspace{0pt}}
\providecommand{\tabularnewline}{\\}

\theoremstyle{plain}
\newtheorem{thm}{\protect\theoremname}
\theoremstyle{plain}
\newtheorem{prop}[thm]{\protect\propositionname}
\theoremstyle{plain}
\newtheorem{cor}[thm]{\protect\corollaryname}
\theoremstyle{plain}
\newtheorem{lem}[thm]{\protect\lemmaname}

\allowdisplaybreaks
\usepackage{bibunits}
\usepackage{amsthm}
\usepackage{amssymb}
\newtheoremstyle{boldstyle1}    
  {\topsep}                    
  {\topsep}                    
  {\itshape}                   
  {}                           
  {\bfseries}                  
  {.}                          
  {.5em}                       
  {}                           

\newtheoremstyle{boldstyle2}    
  {\topsep}                    
  {\topsep}                    
  {}                           
  {}                           
  {\bfseries}                  
  {.}                          
  {.5em}                       
  {}                           

\newtheoremstyle{proofstyle}    
  {\topsep}                    
  {\topsep}                    
  {}                           
  {}                           
  {\itshape}                  
  {.}                          
  {.5em}                       
  {}                           

\theoremstyle{boldstyle2}       

\theoremstyle{boldstyle1}       

\renewcommand{\qed}{\hfill\scalebox{0.7}{$\blacksquare$}}
\theoremstyle{proofstyle}       
\newtheorem*{proof_temp}{Proof}

\theoremstyle{proofstyle}       

\makeatother

\usepackage{babel}
\providecommand{\corollaryname}{Corollary}
\providecommand{\lemmaname}{Lemma}
\providecommand{\propositionname}{Proposition}
\providecommand{\theoremname}{Theorem}

\begin{document}
\title{Multi-Object Posterior Computation\\
via Gibbs Sampling }
\author{Ba Tuong Vo, Ba-Ngu Vo \thanks{The authors are with the School of Electrical Engineering, Computing
and Mathematical Sciences, Curtin University, Australia (email: \{ba-tuong.vo,
ba-ngu.vo, \}@curtin.edu.au). Corresponding author:Ba Tuong Vo}\thanks{Acknowledgment: This work was supported by ARC Future Fellowship FT210100506
and Office of Naval Research Global Grant N629092512046.}}
\maketitle
\begin{abstract}
This work presents a tractable approach to multi-object posterior
computation under a generic measurement likelihood function. While
filtering is a popular solution, valuable historical information is
discarded. Posterior inference, which captures the full history of
the multi-object states, provides a more comprehensive solution but
is notoriously difficult and has received limited attention. Our proposed
approach uses Gibbs Sampling (GS) to generate samples from the multi-object
posterior. In particular, we establish that the conditional distributions
of the multi-object posterior are Bernoulli random finite sets with
explicit existence probabilities and attribute densities. These conditionals
are straightforward to evaluate and sample from, enabling the construction
of an efficient Gibbs sampler with standard convergence guarantees.
To demonstrate its versatility, we develop the first multi-scan multi-object
smoothing algorithm for superpositional measurements. Numerical experiments
show that the proposed method delivers robust performance in challenging
low-SNR scenarios where detection based smoothing deteriorates. Moreover,
posterior samples obtained from our approach provide statistical characterizations
of key variables and parameters, highlighting the advantages of posterior
inference. This approach enriches multi-object estimation techniques,
which historically lacked smoothing capabilities for non-standard
measurements.
\end{abstract}

\begin{IEEEkeywords}
Labeled random finite sets, Multi-Object tracking, Smoothing, Posterior
computation. 
\end{IEEEkeywords}

\section{Introduction\protect\label{sec:Introduction}}

The goal of multi-object state estimation is to infer the underlying
trajectories of multiple objects from noisy observations \cite{BarShalomEd98,BlackmanBook99,MahlerBook14,VoVoNguyenShimo-Overview24}.
The key difference from traditional state estimation is that, instead
of a single vector representing the system state, the multi-object
(system) state is a finite set of vectors. As the multi-object state
evolves in time, the number of objects and their states vary. The
multi-object (system) trajectory consists of the trajectories of individual
objects, each of which may appear and disappear at different times.
Multi-object estimation has a wide range of applications across various
fields such as surveillance and tracking, field robotics, computer
vision,  healthcare and biomedical imaging \cite{BlackmanBook99,Cadena2016SLAM,MaggioCavallaro2011,Meijeringetal-12}.
Due to the unknown and random number of objects or trajectories, multi-object
estimation is far more challenging than traditional state estimation
\cite{VoVoNguyenShimo-Overview24}.

In the Bayesian framework, all information on the (system) trajectory
is captured in the \textit{posterior}--the joint probability density
of the sequence of (system) states up to the current time, conditioned
on the observation history \cite{Meditch73,Briers-10,DoucetTutorial09}.
The current marginal of this posterior, known as the \textit{filtering
density}, is a suboptimal solution commonly used to reduce the computational
complexity (but incurs information loss) \cite{AndersonMoore79},
\cite{Sarkkabook13}. In state estimation, it is taken for granted
that a trajectory is a sequence of states, but this is not the case
in multi-object systems, because the set of trajectories of the objects
is not the time-sequence of finite sets containing their states. Nonetheless,
this conceptual difficulty is circumvented by the labeled multi-object
representation \cite{Goodmanetal97}, \cite{VoConj13}, which ensures
that the multi-object trajectory can be uniquely reconstructed from
the sequence of multi-object states. Hence, in line with state estimation
theory, the multi-object trajectory can be estimated from a sequence
of multi-object state estimates, and the (labeled) multi-object posterior
captures all information, including statistical characterization of
variables/parameters, pertaining to the underlying multi-object trajectory
\cite{VoVoNguyenShimo-Overview24}. 

Multi-object posterior inference is both fundamental and valuable
in practice, but it is extremely challenging computationally. Most
contributions to multi-object estimation have been devoted to filtering,
while posterior inference received very little attention, see for
example the latest overview \cite{VoVoNguyenShimo-Overview24}. Filtering
offers computational efficiency, but only considers the information
on the current multi-object state. While this is acceptable for applications
with good Signal-to-Noise Ratio (SNR), in more challenging signal
settings such as Track-Before-Detect (TBD) \cite{buzzi2005tbd}, \cite{davey2007comparison},
\cite{kim2019comparison}, \cite{Rathnayake-TBDLMB-Sensors-19}, \cite{ristic2025othr},
filtering may not be adequate even for a single-object, necessitating
the integration of information from a sequence of states \cite{WeiYi2020multiframe}.
However, of the few multi-object posterior computation techniques
available, most are based on the standard multi-object model \cite{Vu2014},
\cite{VoVomultiscan18}, \cite{MVVS2022Multi}, which has limited
applicability. 

This work presents a tractable approach to multi-object posterior
computation under a \textit{generic observation model}. Our approach
is based on Gibbs sampling and serves as a versatile tool, especially
when no analytic solutions nor approximations of the posterior are
available. Under the standard multi-object model, the filtering density
and posterior take on analytic forms known as GLMBs \cite{VoConj13},
\cite{VoVomultiscan18}, and\textcolor{blue}{{} }Gibbs sampling has
been used to truncate these analytic expressions \cite{VoVoH2017},
\cite{VoVomultiscan18}, \cite{MVVS2022Multi}, \cite{SVVOM2022Linear}.
In this work we propose a Gibbs sampler to sample directly from the
multi-object posterior (that may not have an analytic form). In particular,
we derive the family of conditionals of the posterior that are easy
to sample and evaluate. Building on these results, we construct a
Gibbs sampler for the generally intractable multi-object posterior,
together with standard convergence results. Note that the proposed
approach also applies to cases with analytic solutions (e.g., the
standard observation model), but its strength lies in settings where
such solutions are unavailable or impractical.

To demonstrate the versatility of the approach, we use it to develop
the first multi-scan multi-object tracking algorithm for \textit{superpositional
measurements} \cite{MahlerSCPHD09}. In many applications, raw measurements
are preprocessed into detections to reduce memory and computational
load. While effective at moderate SNR, this preprocessing discards
information and becomes unreliable at low SNR. Existing multi-object
tracking solutions that directly exploits raw measurements have only
been developed for filtering, and often rely on heuristics or approximations,
which degrade significantly in challenging conditions \cite{NCM2013Computationally,PapiKim-15,LWLGZ2020Robust}.
We evaluate the effectiveness of the proposed multi-object tracker
on superpositional measurements\textcolor{violet}{{} }and compare with
related multi-object detection based trackers\textcolor{violet}{}.
Beyond trajectory estimation, the labeled multi-object representation
provides rich posterior analytics with intrinsic uncertainty quantification.
To illustrate this capability, we present some example analytics pertinent
to situational awareness, which can be readily computed from the posterior
samples. \textcolor{violet}{} \textcolor{violet}{} 

The remainder of this article is structured as follows. Section \ref{sec:Background}
reviews the relevant background. Section \ref{sec:MO-Posterior-Comp}
details the proposed multi\nobreakdash-object posterior Gibbs sampler.
Section \ref{sec:Superpositional} presents the proposed superpositional
measurement multi-object smoother and numerical studies. Section \ref{sec:Conclusion}
provides concluding remarks, while mathematical proofs are given in
the Appendix. Mathematical notation used throughout the paper is given
in Table 1. 

\begin{table}[t]
\begin{onehalfspace}
\caption{List of frequently used notations.\protect\label{tab:List-of-notations-1}\vspace{-1em}}

\end{onehalfspace}

\centering{}%
\begin{tabular}{|c|l|}
\hline 
{\small\,\,\,}{\small\textbf{Notation}} & {\small\quad{}\quad{}\quad{}\quad{}\quad{}\,\,}{\small\textbf{Description}}\tabularnewline
\hline 
{\small$1:k$} & {\small list of numbers $1,2,...,k$}\tabularnewline
{\small$x_{1:k}$ } & {\small lists $x_{1},x_{2},...,x_{k}$ }\tabularnewline
{\small$\mathbb{X}$} & {\small attribute space}\tabularnewline
{\small$\mathbb{L}$} & {\small label space }\tabularnewline
{\small$\bar{S}$} & {\small complement of (the set) $S$}\tabularnewline
{\small$\boldsymbol{S}^{L}$} & $\boldsymbol{S}\cap(\mathbb{X}\times L)$\tabularnewline
{\small$|X|$ } & {\small cardinality of (the finite set) $X$ }\tabularnewline
{\small$h^{X}$} & {\small multi-object exponential $\prod_{x\in X}h\left(x\right)$},\tabularnewline
{\small$\mathbf{1}_{A}$ } & {\small indicator function for the set $A$}\tabularnewline
{\small$\mathbf{1}_{A}^{B}$ } & {\small inclusion indicator for $B\subseteq A$}\tabularnewline
{\small$\delta_{A}[B]$} & {\small Kronecker delta: $1$ if $A=B$, and 0 otherwise}\tabularnewline
{\small$\Delta(\boldsymbol{X})$ } & {\small distinct label indicator $\delta_{|\boldsymbol{X}|}[|\mathcal{L}(\boldsymbol{X})|]$ }\tabularnewline
{\small$\left\lceil \frac{u}{v}\right\rfloor $} & $\frac{u}{u+v}$\tabularnewline
{\small$\left\langle f,g\right\rangle $} & {\small inner product $\int f(x)g(x)dx$}\tabularnewline
\hline 
\end{tabular}
\end{table}

\section{Background\protect\label{sec:Background}}

\subsection{Multi-Object State Representation}

In multi-object systems \cite{VoConj13}, \cite{VoVoNguyenShimo-Overview24},
an object is represented by a \textit{labeled state} $\boldsymbol{x}=(x,\ell)$,
where $x$ is its \textit{attribute} in an attribute space $\mathbb{X}$,
and $\ell$ is its \textit{label} in a (discrete) label space $\mathbb{L}$.
The attribute and label of a (labeled) state $\boldsymbol{x}\in\mathbb{X}\times\mathbb{L}$
are, respectively, given by the \textit{attribute projection} $\mathcal{A\!}:\!(x,\ell)\mapsto x,$
and \textit{label projection} $\mathcal{L}\!:\!(x,\ell)\mapsto\ell$.
In this work we use $\ell=(s,\iota)$, where $s$ is the time of birth
and $\iota$ is an index to distinguish objects born at the same time.
Let $\mathbb{B}_{s}$ denote the (discrete) space of labels born at
time $s$, and assume time starts at 0. Then, for the time interval
$\{0\textrm{:}k\}$, the label space $\mathbb{L}=\uplus_{s\in\{0:k\}}\mathbb{B}_{s}$. 

A \textit{trajectory} is a time-stamped a sequence of labeled states
on a given set of instants, with a common label $\ell$. Further,
at any time instance, a (\textit{labeled}) \textit{multi-object state}
$\boldsymbol{X}$ is a finite subset of the product space $\mathbb{X}\times\mathbb{L}$
with \textit{distinct labels}, i.e., no two elements of $\boldsymbol{X}$
share the same label. The labels of $\boldsymbol{X}$ are distinct
when the \textit{distinct label indicator}
\[
\Delta(\boldsymbol{X})\triangleq\delta_{|\boldsymbol{X}|}[|\mathcal{L}(\boldsymbol{X})|],
\]
equals 1, i.e., $\boldsymbol{X}$ has the same cardinality as $\mathcal{L}(\boldsymbol{X})$.
This representation allows multiple objects to share the same attribute,
e.g., $\{(x,\ell_{1})$, $(x,\ell_{2})\}$, which inevitably occurs
in mergings/collisions or in a finite state space.

Let $\boldsymbol{X}_{t}$ denote the multi-object state at time $t$.
Analogous to single-object systems, a sequence $\boldsymbol{X}_{j:k}=(\boldsymbol{X}_{j},...,\boldsymbol{X}_{k})$
of multi-object states completely characterizes the \textit{multi-object
trajectory}--the set of trajectories of the objects--on the interval
$\{j\textrm{:}k\}$, by grouping the states according to labels (as
illustrated in Figure \ref{fig:tracklabel} adapted from \cite{MVVS2022Multi}).
\begin{figure}[t]
\begin{centering}
\resizebox{88mm}{!}{\includegraphics[clip]{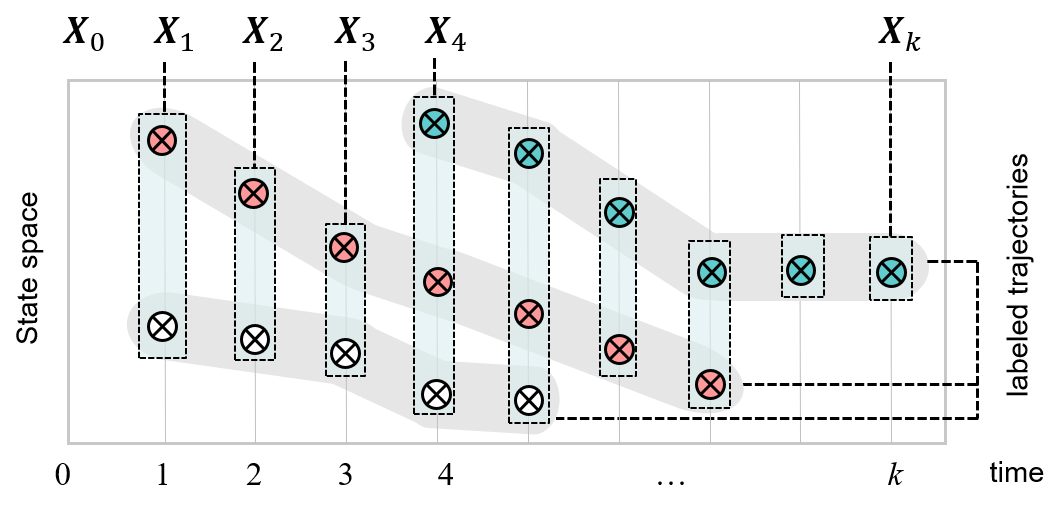}}
\par\end{centering}
\caption{\protect\label{fig:tracklabel} Multi-object trajectory and state
history. The colors of the states indicate their labels. The multi-object
state history $\boldsymbol{X}_{\negthinspace0:k}=$ ($\boldsymbol{X}{}_{\negthinspace0}$,...,$\boldsymbol{X}_{\negthinspace k}$)
can be decomposed into trajectories by grouping the states according
to labels.}
\vspace{0mm}
\end{figure}

For any $\boldsymbol{S}\subseteq\mathbb{X}\times\mathbb{L}$ and $L\subseteq\mathbb{L}$,
we denote $\boldsymbol{S}^{L}$ as the subset of $\boldsymbol{S}$
with label set $L$, i.e.,
\begin{alignat*}{1}
\boldsymbol{S}^{L}\triangleq\boldsymbol{S}\cap(\mathbb{X}\times L).
\end{alignat*}
A multi-object state $\boldsymbol{X}$ can be decomposed in terms
of its labels as $\boldsymbol{X}=\uplus_{\ell\in\mathbb{L}}\boldsymbol{X}^{\{\ell\}},$
which can be enumerated as an $|\mathbb{L}|$-tuple $\boldsymbol{X}\equiv[\boldsymbol{X}^{\{\ell_{1}\}},...,\boldsymbol{X}^{\{\ell_{|\mathbb{L}|}\}}]$.
Note that $\boldsymbol{X}^{\{\ell\}}$ is either empty or singleton,
and we denote $\boldsymbol{X}^{\bar{\{\ell\}}}$ as the subset of
$\boldsymbol{X}$ where the label set is the complement of $\{\ell\}$.

By convention,\textcolor{blue}{{} }vectors are represented by lower
case letters (e.g., $x$ and $\boldsymbol{x}$), and finite sets are
represented by upper case letters (e.g., $X$ and $\boldsymbol{X}$),
where the symbols for labeled entities and their distributions are
bolded (e.g., $\boldsymbol{x}$, $\boldsymbol{X}$, $\boldsymbol{\pi}$,
etc.) to distinguish them from unlabeled ones.

\subsection{Labeled Multi-Bernoulli}

A \textit{random finite set} (\textit{RFS}) of a space $\mathcal{X}$
is a \textit{random variable whose realizations are finite subsets
of} $\mathcal{X}$. A \textit{labeled} RFS (LRFS) with \textit{attribute
space} $\mathbb{X}$ and \textit{label space} $\mathbb{L}$, is an
RFS of the product space $\mathbb{X}\times\mathbb{L}$ such that each
realization has distinct labels \cite{VoVoNguyenShimo-Overview24}. 

A labeled multi-Bernoulli (LMB) is an LRFS parameterized by $\{(r_{\Psi}(\zeta),p_{\Psi}(\cdot;\zeta)):\zeta\in\Psi\}$
and a 1-1 mapping $\sigma:\mathbb{L}\rightarrow\Psi$ that pairs each
$\zeta\in\Psi$ with a label $\ell\in\mathbb{L}$, where $r_{\Psi}(\zeta)\in[0,1]$
and $p_{\Psi}(\cdot;\zeta)$ is a probability density on $\mathbb{X}$.
Let $\mathcal{D}(\sigma)$ denote the domain of $\sigma$, $r(\ell)\triangleq\mathbf{1}_{\mathcal{D}(\sigma)}(\ell)r_{\Psi}(\sigma(\ell))$,
and $p(\cdot,\ell)\triangleq p_{\Psi}(\cdot;\sigma(\ell))$, then
the density of an LMB can be written in terms of the exponential of
the multi-object state:
\begin{alignat}{1}
\boldsymbol{\lambda}(\boldsymbol{X};r,p) & =\Delta\left(\boldsymbol{X}\right)\left[1-r\right]^{\mathcal{D}(\sigma)-\mathcal{L}\left(\boldsymbol{X}\right)}r^{\mathcal{L}\left(\boldsymbol{X}\right)}p^{\boldsymbol{X}}.\label{eq:LMBexpX}
\end{alignat}
The parameter $r(\ell)$ is the existence probability of (the object
with) label $\ell$ and $p(\cdot,\ell)$ is the probability density
of its attribute conditional on the existence of label $\ell$. Further,
since $\sum_{L\subseteq\mathbb{L}}\left[1-r\right]^{\mathcal{D}(\sigma)-L}r^{L}=1$,
the set integral of $\boldsymbol{\lambda}$ is indeed unity \cite{VoVoNguyenShimo-Overview24}.
Note that a \textit{Bernoulli} RFS is the special case of an LMB where
the parameter set is a singleton. 

Alternatively, we can also write \cite{VoConj13}, \cite{VoVoNguyenShimo-Overview24}
\begin{alignat}{1}
\boldsymbol{\lambda}\left(\boldsymbol{X},\Psi;\lambda\right) & =\Delta(\boldsymbol{X})\mathbf{1}_{\mathcal{D}(\sigma)}^{\mathcal{L}(\boldsymbol{X})}\lambda(\boldsymbol{X};\cdot)^{\Psi},\label{eq:LMBexpParam}
\end{alignat}
where $\mathbf{1}_{\mathcal{D}(\sigma)}^{\mathcal{L}(\boldsymbol{X})}=[\mathbf{1}_{\mathcal{D}(\sigma)}]^{\mathcal{L}(\boldsymbol{X})}$,
and
\begin{alignat}{1}
\lambda(\boldsymbol{X};\zeta) & =\left\{ \!\!\!\begin{array}{ll}
r(\mathcal{\sigma}^{\texttt{-}1}(\zeta))p{}^{\boldsymbol{X}^{\{\mathcal{\sigma}^{\texttt{-}1}(\zeta)\}}}, & \!\!\!\text{if }\mathcal{\sigma}^{\texttt{-}1}(\zeta)\in\mathcal{L}\left(\boldsymbol{X}\right)\\
1-r(\mathcal{\sigma}^{\texttt{-}1}(\zeta)), & \!\!\!\text{if }\mathcal{\sigma}^{\texttt{-}1}(\zeta)\notin\mathcal{L}\left(\boldsymbol{X}\right)
\end{array}\!\!\!.\!\right.\label{eq:LMBexpParam1}
\end{alignat}
Note that $r(\mathcal{\sigma}^{\texttt{-}1}(\zeta))=r_{\Psi}(\zeta)$.
In addition, $\mathcal{\sigma}^{\texttt{-}1}(\zeta)\in\mathcal{L}\left(\boldsymbol{X}\right)$
means $\zeta$ corresponds uniquely to a label of $\boldsymbol{X}$,
i.e., there exists a unique element $\left(x,\mathcal{\sigma}^{\texttt{-}1}(\zeta)\right)\in\boldsymbol{X}$,
with label $\mathcal{\sigma}^{\texttt{-}1}(\zeta)$, and hence, $p^{\boldsymbol{X}^{\{\mathcal{\sigma}^{\texttt{-}1}(\zeta)\}}}=$
$p(x,\mathcal{\sigma}^{\texttt{-}1}(\zeta))=p_{\Psi}(x,\zeta)$. 

\subsection{Multi-Object Posterior}

In multi-object estimation, we are interested in the multi-object
trajectory or the history of the multi-object state $\boldsymbol{X}_{0:k}$.
All information on the multi-object trajectory is contained in the
multi-object posterior, defined as the density of the multi-object
state history conditioned on the observation history. The multi-object
posterior is given by \cite{VoVomultiscan18}, \cite{VoVoNguyenShimo-Overview24}:
\begin{alignat}{1}
\boldsymbol{\pi}_{z_{1:k}}(\boldsymbol{X}_{0:k}) & =\frac{\boldsymbol{\pi}_{0}(\boldsymbol{X}_{0})\prod_{t=1}^{k}\boldsymbol{g}_{t}(z_{t}|\boldsymbol{X}_{t})\boldsymbol{f}_{\!t}(\boldsymbol{X}_{t}|\boldsymbol{X}_{t\texttt{-}1})}{K(z_{1:k})},\label{eq:multiobject-posterior}
\end{alignat}
where $K(z_{1:k})$ is the normalizing constant to ensure that $\boldsymbol{\pi}_{z_{1:k}}$
integrates (via multiple set integrals) to 1, $\boldsymbol{g}_{t}(z_{t}|\boldsymbol{X}_{t})$
is the likelihood that the multi-object state $\boldsymbol{X}_{t}$
generates the observation $z_{t}$, and $\boldsymbol{f}_{\!t}(\boldsymbol{X}_{t}|\boldsymbol{X}_{t\texttt{-}1})$
is the multi-object transition density from time $t-1$ to $t$. For
simplicity hereon, we omit the dependence on the observation history
and write $\boldsymbol{\pi}_{0:k}(\boldsymbol{X}_{0:k})\triangleq\boldsymbol{\pi}_{z_{1:k}}(\boldsymbol{X}_{0:k})$.

In this work we use the following \textit{standard multi-object dynamic
model} \cite{VoConj13}, \cite{VoVoNguyenShimo-Overview24}. Note
that the space $\mathbb{L}_{t}$ of labels up to time $t$, can decomposed
into the disjoint sets $\mathbb{B}_{t}$ (the birth label space at
time $t$) and $\mathbb{L}_{t\texttt{-}1}$ (the label space up to
time $t-1$), i.e., $\mathbb{L}_{t}=\mathbb{B}_{t}\uplus\mathbb{L}_{t\texttt{-}1}$.
Consequently, the multi-object state $\boldsymbol{X}_{t}=\boldsymbol{X}_{\!t}^{\mathbb{B}_{t}}\uplus\boldsymbol{X}_{\!t}^{\mathbb{L}_{t\texttt{-}1}}$,
where $\boldsymbol{X}_{\!t}^{\mathbb{B}_{t}}$ is the set $\boldsymbol{X}_{\!t}\cap(\mathbb{X}\times\mathbb{B}_{t})$
of new born objects, and $\boldsymbol{X}_{\!t}^{\mathbb{L}_{t\texttt{-}1}}$
is the set $\boldsymbol{X}_{\!t}\cap(\mathbb{X}\times\mathbb{L}_{t\texttt{-}1})$
of surviving objects. 

Given a multi-object state $\boldsymbol{X}_{t\texttt{-}1}$ at time
$t-1$, the sets of new born and surviving objects are, respectively,
modeled by the independent LMBs $\boldsymbol{f}_{B,t}$ with parameters
$\{(P_{B,t}(\ell),p_{B,t}(\cdot,\ell)):\ell\in\mathbb{B}_{t}\}$ and
$\boldsymbol{f}_{\!S,t\!}\left(\cdot|\boldsymbol{X}_{t\texttt{-}1}\right)$
with parameters $\{(P_{S,t}(\boldsymbol{\zeta}),f_{S,t}(\cdot|\boldsymbol{\zeta})):\boldsymbol{\zeta}\in\boldsymbol{X}_{t-1}\}$.
Specifically, an object with state $(x_{t},\ell)$ $\in\mathbb{X}\times\mathbb{B}_{t}$
is either born with probability $P_{B,t}(\ell)$ and probability density
$p_{B,t}(x_{t},\ell)$, or not born with probability $Q_{B,t}(\ell)=1-P_{B,t}(\ell)$.
Further, each element $(x_{t\texttt{-}1},\ell)\in\boldsymbol{X}_{t\texttt{-}1}$
either survives with survival probability $P_{S,t}(x_{t\texttt{-}1},\ell)$
and evolves to state $(x_{t},\ell)$ at time $t$, with probability
density $f_{S,t}(x_{t}|x_{t\texttt{-}1},\ell)$, or dies with probability
$Q_{S,t}(x_{t\texttt{-}1},\ell)=1-P_{S,t}(x_{t\texttt{-}1},\ell)$.
The standard multi-object transition density is completely characterized
by the model parameters $P_{B,t},p_{B,t},P_{S,t},f_{S,t}$, and is
given by \cite{VoConj13}
\begin{equation}
\boldsymbol{f}_{\!t}\!\left(\boldsymbol{X}_{t}|\boldsymbol{X}_{t\texttt{-}1}\right)=\boldsymbol{f}_{\!B,t}\!\left(\boldsymbol{X}_{\!t}^{\mathbb{B}_{t}}\right)\boldsymbol{f}_{\!S,t}\!\left(\boldsymbol{X}_{\!t}^{\mathbb{L}_{t\texttt{-}1}}|\boldsymbol{X}_{\!t\texttt{-}1}\right).\label{eq:multiobject-trans}
\end{equation}
Note that $\mathcal{L}(\boldsymbol{X}_{\!t}^{\mathbb{B}_{t}})\subseteq\mathbb{B}_{t}$,
and $\mathcal{L}(\boldsymbol{X}_{\!t}^{\mathbb{L}_{t\texttt{-}1}})\subseteq\mathcal{L}(\boldsymbol{X}_{\!t\texttt{-}1})$.
More sophisticated multi-object dynamic models that include spawning
or splitting, and inter-object interactions can be found in \cite{MahlerBook07,MahlerBook14,Mahler2003,Bryantetal18},
\cite{Nguyenetal-CellSpawning21}, \cite{Gostar-Interacting-19}.

For simplicity, we drop the time subscript $t$, and replace the time
subscripts $t-1$ and $t+1$, by ``$-$'' and ``$+$'', respectively.
For convenience, hereon we write the LMB $\boldsymbol{f}_{B}$ as
\begin{alignat}{1}
\boldsymbol{f}_{\!B}(\boldsymbol{X}) & =\Delta(\boldsymbol{X})\mathbf{1}_{\mathbb{B}}^{\mathcal{L}(\boldsymbol{X})}Q_{B}^{\mathbb{B}-\mathcal{L}(\boldsymbol{X})}b^{\boldsymbol{X}}\label{eq:birthRFS1}\\
b(x,\ell) & =P_{B}(\ell)p_{B}(x,\ell),\label{eq:birthRFS2}
\end{alignat}
and for the LMB $\boldsymbol{f}_{\!S}\left(\cdot|\boldsymbol{X}_{\texttt{-}}\right)$
we use both forms (\ref{eq:LMBexpX}) and (\ref{eq:LMBexpParam}).\textcolor{cyan}{{}
}The first form involves a multi-object exponential of the argument
\begin{alignat}{1}
\boldsymbol{f}_{\!S}(\boldsymbol{X}|\boldsymbol{X}_{\texttt{-}}) & =\Delta(\boldsymbol{X})\mathbf{1}_{\mathcal{L}(\boldsymbol{X}_{\texttt{-}})}^{\mathcal{L}(\boldsymbol{X})}Q_{S}^{\boldsymbol{X}_{\texttt{-}}^{\mathcal{L}\left(\boldsymbol{X}_{\texttt{-}}\right)-\mathcal{L}\left(\boldsymbol{X}\right)}}\overset{\shortleftarrow}{f}(\cdot;\boldsymbol{X}_{\texttt{-}})^{\boldsymbol{X}}\label{eq:survival-RFS-X-1}\\
\overset{\shortleftarrow}{f}(x,\ell;\boldsymbol{X}_{\texttt{-}}) & =[P_{S}(\cdot)f_{S}(x|\cdot)]^{\boldsymbol{X}_{\texttt{-}}^{\{\ell\}}},\label{eq:survival-RFS-X-2}
\end{alignat}
here we use $\left[Q_{S}(\sigma(\cdot))\right]^{\mathcal{L}\left(\boldsymbol{X}_{\texttt{-}}\right)-\mathcal{L}\left(\boldsymbol{X}\right)}$
= $Q_{S}^{\boldsymbol{X}_{\texttt{-}}^{\mathcal{L}\left(\boldsymbol{X}_{\texttt{-}}\right)-\mathcal{L}\left(\boldsymbol{X}\right)}}$
because $Q_{S}(\sigma(\ell))$ = $Q_{S}^{\boldsymbol{X}_{\texttt{-}}^{\{\ell\}}}$.
The second form involves a multi-object exponential of the parameter
set--in this case, the previous multi-object state $\boldsymbol{X}_{\texttt{-}}$
:
\begin{alignat}{1}
\boldsymbol{f}_{\!S}(\boldsymbol{X}|\boldsymbol{X}_{\texttt{-}}) & =\Delta(\boldsymbol{X})\mathbf{1}_{\mathcal{L}(\boldsymbol{X}_{\texttt{-}})}^{\mathcal{L}(\boldsymbol{X})}\overset{\shortrightarrow}{f_{\texttt{-}}}(\boldsymbol{X};\cdot)^{\boldsymbol{X}_{\texttt{-}}}\label{eq:survival-RFS-Xminus-1}\\
\overset{\shortrightarrow}{f_{\texttt{-}}}(\boldsymbol{X};x_{\texttt{-}},\ell) & =\begin{cases}
\!P_{S}(x_{\texttt{-}},\ell)f_{S}(x|x_{\texttt{-}},\ell), & \!\!(x,\ell)\in\boldsymbol{X}\\
\!Q_{S}(x_{\texttt{-}},\ell), & \!\!\ell\notin\mathcal{L}(\boldsymbol{X}).
\end{cases}\label{eq:survival-RFS-Xminus-2}
\end{alignat}
Note that $\boldsymbol{f}_{B}(\boldsymbol{X})=0$ if $\boldsymbol{X}$
has label(s) outside of $\mathbb{B}$, and $\boldsymbol{f}_{\!S}(\boldsymbol{X}|\boldsymbol{X}_{\texttt{-}})=0$
if $\boldsymbol{X}$ has label(s) outside of $\mathcal{L}(\boldsymbol{X}_{\texttt{-}})$. 

\section{Multi-Object Posterior Computation\protect\label{sec:MO-Posterior-Comp}}

Markov Chain Monte Carlo (MCMC) methods are essential for posterior
computation when analytical solutions are intractable due to complex
models or high-dimensional state/parameter spaces \cite{Geyer95},
\cite{Robert_Bayesian_Choice}. The key idea behind MCMC is to simulate
samples using a Markov chain---a stochastic process where each state/iterate
depends only on the previous one. The chain is designed so that its
long-run behavior (stationary distribution) matches the posterior.
Popular algorithms like Metropolis-Hastings, Gibbs sampling, and Hamiltonian
Monte Carlo implement this principle in different ways, trading-off
between efficiency and accuracy \cite{geman1984stochastic}. Once
the chain has \textquotedbl converged\textquotedbl , the samples
it produces can be used to approximate the desired distribution and
compute statistical estimates.

This section presents a Gibbs sampler for computing the multi-object
posterior. Gibbs sampling iteratively generates the value at each
coordinate (of the sample) in turn, given the current values at the
other coordinates. This approach offers several advantages. First,
it is conceptually simple and easy to implement, especially when the
full conditional distributions of the variables are known and easy
to sample from. Second, Gibbs sampling can be computationally efficient
in high-dimensional settings, as it updates one coordinate at a time
rather than proposing changes to the entire sample, with guaranteed
acceptance. Third, it avoids the need for choosing proposal distributions,
which simplifies the setup and reduces the risk of poor performance
due to suboptimal tuning. Finally, Gibbs sampling is guaranteed to
converge to the target distribution under mild regularity conditions.

\subsection{Gibbs Sampling for Multi-object Posterior Computation\protect\label{subsec:GS-MO-Posterior}}

The proposed multi-object posterior Gibbs sampler moves a given multi-object
trajectory sample $\boldsymbol{X}_{1:k}$ to a new sample $\acute{\boldsymbol{X}}_{1:k}$
by sequentially refreshing the values at each coordinate defined by
the time indices $1:k$, and labels in $\mathbb{L}_{k}$. Internally,
the proposed algorithm consists of two nested loops, an outer loop
which traverses all times $t=1:k$, and for each $t$, an inner loop
which traverses all labels $\ell\in\mathbb{L}_{t}$. 

The outer loop may traverse over time forwards or backwards, and for
each time $t$, moves the current multi-object state sample $\boldsymbol{X}_{t}$
to a new sample $\acute{\boldsymbol{X}}_{t}$. Due to the Markov property
of the system model, $\acute{\boldsymbol{X}}_{t}$ depends only on
the multi-object state samples at times $t-1,t,t+1$. If the loop
proceeds forward over time, $\acute{\boldsymbol{X}}_{t}$ depends
on $\acute{\boldsymbol{X}}_{t-1}$, $\boldsymbol{X}_{t}$, and $\boldsymbol{X}_{t+1}$;
and conversely if the loop proceeds backwards over time, $\acute{\boldsymbol{X}}_{t}$
depends on $\boldsymbol{X}_{t-1}$, $\boldsymbol{X}_{t}$, and $\acute{\boldsymbol{X}}_{t+1}$.

For each time $t$, the inner loop traverses all possible labels,
and for each label $\ell$, proposes a new single object sample $\acute{\boldsymbol{X}}_{t}^{\{\ell\}}$
to form the multi-object sample $\acute{\boldsymbol{X}}_{t}=\cup_{\ell\in\mathbb{L}_{t}}\acute{\boldsymbol{X}}_{t}^{\{\ell\}}$,
which takes into account survivals, deaths, and births. Each new single
object sample $\acute{\boldsymbol{X}}_{t}^{\{\ell\}}$ at time $t$
is either a singleton $\{(\acute{x},\ell)\}$ if a survival or a birth
is sampled, or an empty set $\emptyset$ otherwise. Hence, the conditional
distribution of each $\acute{\boldsymbol{X}}_{t}^{\{\ell\}}$ is a
Bernoulli whose existence probability and attribute density is a function
of the current measurement $z_{t},$ and the multi-object state samples
at times $t-1,t,t+1$. In fact, due to the conditional independence
of the attribute transition, we only need the attributes of the sampled
label at $t-1,t+1$. 

Intuitively, the label continuity property (from the standard transition
model) imposes the constraint that a label born before time $t+1$
and is still live at time $t+1$ , must be live at time $t$ (hence
has existence probability 1), and conversely a label born before time
$t-1$ and is not live at time $t-1$ cannot be live at time $t$
(hence has existence probability 0). If a label is sampled as live
at time $t$, its attribute is then sampled from the corresponding
attribute density, given its sampled states at times $t-1,t+1$ (due
to conditional independence of the attribute transition), and the
other members of the multi-object sample at time $t$. Sampling from
unnormalized single object densities can be implemented with any suitable
scheme such as Particle Flow (summarized in Appendix \ref{subsec:Particle-Flow}).

\subsection{Conditionals for the Gibbs Sampler\protect\label{subsec:cond-GS}}

The proposed Gibbs sampler described above relies on the ability to
sample from the conditional distribution of object $\ell$ at time
$t$, given the states of all other objects at time $t$ and the multi-object
states at all other times. This conditional Bernoulli RFS (because
object $\ell$ either exists or not exists) can be constructed from
the posterior $\boldsymbol{\pi}_{0:k}$ as 
\begin{alignat}{1}
\!\!\negthickspace\negthickspace\boldsymbol{\pi}_{t}^{(\ell)}(\boldsymbol{X}_{t}^{\{\ell\}}|\boldsymbol{X}_{0:t\texttt{-}1},\boldsymbol{X}_{t}^{\{\bar{\ell}\}\!},\boldsymbol{X}_{t\textup{\texttt{\textup{+}}}1:k})\nonumber \\
 & \negthickspace\negthickspace\negthickspace\negthickspace\negthickspace\negthickspace\negthickspace\negthickspace\negthickspace\negthickspace\negthickspace\negthickspace\negthickspace\negthickspace\negthickspace\negthickspace\negthickspace\negthickspace\negthickspace\negthickspace\negthickspace\negthickspace\negthickspace\negthickspace\negthickspace\negthickspace\negthickspace\negthickspace\negthickspace\negthickspace\negthickspace\negthickspace\negthickspace\negthickspace\negthickspace\negthickspace\negthickspace\negthickspace\triangleq\frac{\boldsymbol{\pi}_{t}(\boldsymbol{X}_{t}^{\{\ell\}\!}\uplus\boldsymbol{X}_{t}^{\{\bar{\ell}\}}|\boldsymbol{X}_{0:t\texttt{-}1},\boldsymbol{X}_{t\textup{\texttt{\textup{+}}}1:k})}{\int\!\boldsymbol{\pi}_{t}(\boldsymbol{Y}^{\{\ell\}\!}\uplus\boldsymbol{X}_{t}^{\{\bar{\ell}\}}|\boldsymbol{X}_{0:t\texttt{-}1},\boldsymbol{X}_{t\textup{\texttt{\textup{+}}}1:k})\delta\boldsymbol{Y}^{\{\ell\}}},\label{eq:conditional-l-t}
\end{alignat}
where $t\in\{1,...,k-1\}$, $\ell\in\mathbb{L}_{t}$, and
\begin{alignat}{1}
\!\!\!\boldsymbol{\pi}_{t}(\boldsymbol{X}_{t}|\boldsymbol{X}_{\!0:t\texttt{-}1},\boldsymbol{X}_{\!t\textup{\texttt{\textup{+}}}1:k}) & \triangleq\frac{\boldsymbol{\pi}_{0:k}(\boldsymbol{X}_{0:t\texttt{-}1},\boldsymbol{X}_{t},\boldsymbol{X}_{t\textup{\texttt{\textup{+}}}1:k})}{\int\!\boldsymbol{\pi}_{0:k}(\boldsymbol{X}_{\!0:t\texttt{-}1},\boldsymbol{Y}\!,\boldsymbol{X}_{\!t\textup{\texttt{\textup{+}}}1:k})\delta\boldsymbol{Y}},\!\label{eq:conditional-t}
\end{alignat}
is the conditional distribution of the multi-object state at time
$t$. The conditionals for $t=0,k$ are similarly defined.

Note that due to the Markov dynamics
\begin{alignat*}{1}
\boldsymbol{\pi}_{t}(\boldsymbol{X}_{t}|\boldsymbol{X}_{0:t\texttt{-}1},\boldsymbol{X}_{t\textup{\texttt{\textup{+}}}1:k})\\
 & \negthickspace\negthickspace\negthickspace\negthickspace\negthickspace\negthickspace\negthickspace\negthickspace\negthickspace\negthickspace\negthickspace\negthickspace\negthickspace\negthickspace\negthickspace\negthickspace\negthickspace\negthickspace\negthickspace\negthickspace\negthickspace\negthickspace\negthickspace\negthickspace\negthickspace\negthickspace\negthickspace\negthickspace\negthickspace\negthickspace\negthickspace\negthickspace\propto\boldsymbol{\pi}_{0}(\boldsymbol{X}_{0})\negthickspace\prod_{j\in\{1:k\}-\{t,t\texttt{\textup{+}}1\}}\negthickspace\boldsymbol{g}_{j}(z_{j}|\boldsymbol{X}_{j})\boldsymbol{f}_{\!j}(\boldsymbol{X}_{j}|\boldsymbol{X}_{j\texttt{-}1})\\
 & \negthickspace\negthickspace\negthickspace\negthickspace\negthickspace\negthickspace\negthickspace\negthickspace\negthickspace\negthickspace\negthickspace\negthickspace\negthickspace\negthickspace\negthickspace\negthickspace\negthickspace\negthickspace\negthickspace\negthickspace\negthickspace\negthickspace\negthickspace\negthickspace\negthickspace\negthickspace\negthickspace\negthickspace\negthickspace\negthickspace\negthickspace\negthickspace\times\boldsymbol{g}_{t\texttt{\textup{+}}1\!}(z_{t\texttt{\textup{+}}1}|\boldsymbol{X}_{t\texttt{\textup{+}}1})\boldsymbol{f}_{\!t\textup{\texttt{\textup{+}}}1\!}(\boldsymbol{X}_{t\texttt{\textup{+}}1}|\boldsymbol{X}_{t})\boldsymbol{g}_{t}(z_{t}|\boldsymbol{X}_{t})\boldsymbol{f}_{\!t}(\boldsymbol{X}_{t}|\boldsymbol{X}_{t\texttt{-}1})\\
 & \negthickspace\negthickspace\negthickspace\negthickspace\negthickspace\negthickspace\negthickspace\negthickspace\negthickspace\negthickspace\negthickspace\negthickspace\negthickspace\negthickspace\negthickspace\negthickspace\negthickspace\negthickspace\negthickspace\negthickspace\negthickspace\negthickspace\negthickspace\negthickspace\negthickspace\negthickspace\negthickspace\negthickspace\negthickspace\negthickspace\negthickspace\negthickspace\propto\boldsymbol{f}_{\!t\textup{\texttt{\textup{+}}}1\!}(\boldsymbol{X}_{t\texttt{\textup{+}}1}|\boldsymbol{X}_{t})\boldsymbol{f}_{\!t}(\boldsymbol{X}_{t}|\boldsymbol{X}_{t\texttt{-}1})\boldsymbol{g}_{t}(z_{t}|\boldsymbol{X}_{t}).
\end{alignat*}
This means the conditional $\boldsymbol{\pi}_{t}(\boldsymbol{X}_{t}|\boldsymbol{X}_{0:t\texttt{-}1},\boldsymbol{X}_{t\textup{\texttt{\textup{+}}}1:k})$
only depends on $\boldsymbol{X}_{t}$, the measurement $z_{t}$ and
its neighbouring multi-object states $\boldsymbol{X}_{t-1}$, $\boldsymbol{X}_{t+1}$.
Next, we establish that the product $\boldsymbol{f}_{\!t\textup{\texttt{\textup{+}}}1\!}(\boldsymbol{X}_{t\texttt{\textup{+}}1}|\boldsymbol{X}_{t})\boldsymbol{f}_{\!t}(\boldsymbol{X}_{t}|\boldsymbol{X}_{t\texttt{-}1})$
is an LMB in $\boldsymbol{X}_{t}$ (note for $t=k$, $\boldsymbol{f}_{\!t}(\boldsymbol{X}_{t}|\boldsymbol{X}_{t\texttt{-}1})$
is an LMB by default), and use this to determine an analytic characterization
of the Bernoulli conditional $\boldsymbol{\pi}_{t}^{(\ell)}(\boldsymbol{X}_{t}^{\{\ell\}}|\boldsymbol{X}_{0:t\texttt{-}1},\boldsymbol{X}_{t}^{\{\bar{\ell}\}},\boldsymbol{X}_{t\textup{\texttt{\textup{+}}}1:k})$.
The proofs are given in Appendix \ref{sec:Appendix-proof}.
\begin{prop}
\label{Prop:Transition-Prod-LMB}The product $\boldsymbol{f}_{\texttt{\textup{+}}}(\boldsymbol{X}_{\texttt{\textup{+}}}|\boldsymbol{X})\boldsymbol{f}(\boldsymbol{X}|\boldsymbol{X}_{\texttt{-}})$
of consecutive multi-object transition densities is an LMB in $\boldsymbol{X}$.
Specifically, \textup{}
\begin{alignat*}{1}
\boldsymbol{f}_{\texttt{\textup{+}}}(\boldsymbol{X}_{\texttt{\textup{+}}}|\boldsymbol{X})\boldsymbol{f}(\boldsymbol{X}|\boldsymbol{X}_{\texttt{-}})\propto & \mathbf{1}_{\mathcal{\mathcal{L}}(\boldsymbol{X}^{\mathbb{\mathbb{B}}})}^{\mathcal{L}(\boldsymbol{X}_{\texttt{\textup{+}}}^{\mathbb{\mathbb{L}}})\cap\mathbb{B}}\boldsymbol{\lambda}(\boldsymbol{X}^{\mathbb{\mathbb{B}}};r_{\boldsymbol{X}_{\texttt{\textup{+}}}},p_{\boldsymbol{X}_{\texttt{\textup{+}}}})\times\\
 & \mathbf{1}_{\mathcal{\mathcal{L}}(\boldsymbol{X}^{\mathbb{\mathbb{L_{\texttt{-}}}}})}^{\mathcal{L}(\boldsymbol{X}_{\texttt{\textup{+}}}^{\mathbb{\mathbb{L}}})\cap\mathbb{\mathbb{L_{\texttt{-}}}}}\boldsymbol{\lambda}(\boldsymbol{X}^{\mathbb{\mathbb{L_{\texttt{-}}}}};r_{\boldsymbol{X}_{\texttt{\textup{+}}},\boldsymbol{X}_{\texttt{-}}},p_{\boldsymbol{X}_{\texttt{\textup{+}}},\boldsymbol{X}_{\texttt{-}}}),
\end{alignat*}
where $\boldsymbol{\lambda}(\boldsymbol{X}^{\mathbb{\mathbb{B}}};r_{\boldsymbol{X}_{\texttt{\textup{+}}}},p_{\boldsymbol{X}_{\texttt{\textup{+}}}})$
and $\boldsymbol{\lambda}(\boldsymbol{X}^{\mathbb{\mathbb{L_{\texttt{-}}}}};r_{\boldsymbol{X}_{\texttt{\textup{+}}},\boldsymbol{X}_{\texttt{-}}},p_{\boldsymbol{X}_{\texttt{\textup{+}}},\boldsymbol{X}_{\texttt{-}}})$
are LMBs, with
\begin{alignat*}{1}
{\normalcolor {\normalcolor {\normalcolor {\normalcolor r_{\boldsymbol{X}_{\texttt{\textup{+}}}}(\ell}\mathclose{\normalcolor )}}}} & {\normalcolor {\normalcolor \mathrel{\normalcolor =}}}\left\lceil \!\frac{\langle\overset{\shortrightarrow}{f}(\boldsymbol{X}_{\texttt{\textup{+}}}^{\mathbb{\mathbb{L}}};\cdot)b(\cdot)\rangle(\ell)}{Q_{B}(\ell)}\!\right\rfloor ^{1-|\boldsymbol{X}_{\texttt{\textup{+}}}^{\{\ell\}}|},\\
{\normalcolor {\normalcolor p_{\boldsymbol{X}_{\texttt{\textup{+}}}}(x,\ell}\mathclose{\normalcolor )}} & {\normalcolor \mathrel{\normalcolor =}}\frac{\overset{\shortrightarrow}{f}(\boldsymbol{X}_{\texttt{\textup{+}}}^{\mathbb{\mathbb{L}}};x,\ell)b(x,\ell)}{\langle\overset{\shortrightarrow}{f}(\boldsymbol{X}_{\texttt{\textup{+}}}^{\mathbb{\mathbb{L}}};\cdot)b(\cdot)\rangle(\ell)},\\
{\normalcolor {\normalcolor r_{\boldsymbol{X}_{\texttt{\textup{+}}},\boldsymbol{X}_{\texttt{-}}}(\ell}\mathclose{\normalcolor )}} & {\normalcolor \mathrel{\normalcolor =}}\left\lceil \!\frac{\langle\overset{\shortrightarrow}{f}(\boldsymbol{X}_{\texttt{\textup{+}}}^{\mathbb{\mathbb{L}}};\cdot)\overset{\shortleftarrow}{f}(\cdot;\boldsymbol{X}_{\texttt{-}})\rangle(\ell)}{Q_{S}^{\boldsymbol{X}_{\texttt{-}}^{\{\ell\}}}}\!\right\rfloor ^{1-|\boldsymbol{X}_{\texttt{\textup{+}}}^{\{\ell\}}|},\\
{\normalcolor p_{\boldsymbol{X}_{\texttt{\textup{+}}},\boldsymbol{X}_{\texttt{-}}}(x,\ell}\mathclose{\normalcolor )} & \mathrel{\normalcolor =}\frac{\overset{\shortrightarrow}{f}(\boldsymbol{X}_{\texttt{\textup{+}}}^{\mathbb{\mathbb{L}}};x,\ell)\overset{\shortleftarrow}{f}(x,\ell;\boldsymbol{X}_{\texttt{-}})}{\langle\overset{\shortrightarrow}{f}(\boldsymbol{X}_{\texttt{\textup{+}}}^{\mathbb{\mathbb{L}}};\cdot)\overset{\shortleftarrow}{f}(\cdot;\boldsymbol{X}_{\texttt{-}})\rangle(\ell)}.
\end{alignat*}
\end{prop}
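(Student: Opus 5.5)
The plan is to write both factors as multi-object exponentials over labels of $\mathbb{L}=\mathbb{B}\uplus\mathbb{L}_{\texttt{-}}$ and then regroup them label by label. For $\boldsymbol{f}(\boldsymbol{X}|\boldsymbol{X}_{\texttt{-}})$ we use (\ref{eq:multiobject-trans}) with the forms (\ref{eq:birthRFS1}) and (\ref{eq:survival-RFS-X-1}), which are exponentials in the argument $\boldsymbol{X}$. This gives
\[
\boldsymbol{f}(\boldsymbol{X}|\boldsymbol{X}_{\texttt{-}})=\Delta(\boldsymbol{X}^{\mathbb{B}})\mathbf{1}_{\mathbb{B}}^{\mathcal{L}(\boldsymbol{X}^{\mathbb{B}})}Q_{B}^{\mathbb{B}-\mathcal{L}(\boldsymbol{X}^{\mathbb{B}})}b^{\boldsymbol{X}^{\mathbb{B}}}\,\Delta(\boldsymbol{X}^{\mathbb{L}_{\texttt{-}}})\mathbf{1}_{\mathcal{L}(\boldsymbol{X}_{\texttt{-}})}^{\mathcal{L}(\boldsymbol{X}^{\mathbb{L}_{\texttt{-}}})}Q_{S}^{\boldsymbol{X}_{\texttt{-}}^{\mathcal{L}(\boldsymbol{X}_{\texttt{-}})-\mathcal{L}(\boldsymbol{X})}}\overset{\shortleftarrow}{f}(\cdot;\boldsymbol{X}_{\texttt{-}})^{\boldsymbol{X}^{\mathbb{L}_{\texttt{-}}}}.
\]
For $\boldsymbol{f}_{\texttt{+}}(\boldsymbol{X}_{\texttt{+}}|\boldsymbol{X})$ we use (\ref{eq:multiobject-trans}) again. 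Its birth factor $\boldsymbol{f}_{B,\texttt{+}}(\boldsymbol{X}_{\texttt{+}}^{\mathbb{B}_{\texttt{+}}})$ does not involve $\boldsymbol{X}$ and is absorbed into the proportionality constant. Its survival factor we write in the second form (\ref{eq:survival-RFS-Xminus-1}), which is an exponential over the parameter set $\boldsymbol{X}$:
\[
\Delta(\boldsymbol{X}_{\texttt{+}}^{\mathbb{L}})\,\mathbf{1}_{\mathcal{L}(\boldsymbol{X})}^{\mathcal{L}(\boldsymbol{X}_{\texttt{+}}^{\mathbb{L}})}\,\overset{\shortrightarrow}{f}(\boldsymbol{X}_{\texttt{+}}^{\mathbb{L}};\cdot)^{\boldsymbol{X}}.
\]
Here $\Delta(\boldsymbol{X}_{\texttt{+}}^{\mathbb{L}})$ is also a constant. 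Since $\boldsymbol{X}=\boldsymbol{X}^{\mathbb{B}}\uplus\boldsymbol{X}^{\mathbb{L}_{\texttt{-}}}$, the indicator splits as $\mathbf{1}_{\mathcal{L}(\boldsymbol{X}^{\mathbb{B}})}^{\mathcal{L}(\boldsymbol{X}_{\texttt{+}}^{\mathbb{L}})\cap\mathbb{B}}\,\mathbf{1}_{\mathcal{L}(\boldsymbol{X}^{\mathbb{L}_{\texttt{-}}})}^{\mathcal{L}(\boldsymbol{X}_{\texttt{+}}^{\mathbb{L}})\cap\mathbb{L}_{\texttt{-}}}$, which gives the two indicators in the statement. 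The exponential $\overset{\shortrightarrow}{f}(\boldsymbol{X}_{\texttt{+}}^{\mathbb{L}};\cdot)^{\boldsymbol{X}}$ factors the same way. The whole product therefore separates into a $\mathbb{B}$-part and an $\mathbb{L}_{\texttt{-}}$-part, and we treat each part separately.

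Take the $\mathbb{B}$-part and fix $\ell\in\mathbb{B}$. If $\ell\in\mathcal{L}(\boldsymbol{X})$ with attribute $x$, the contribution of $\ell$ is $\overset{\shortrightarrow}{f}(\boldsymbol{X}_{\texttt{+}}^{\mathbb{L}};x,\ell)b(x,\ell)$. If $\ell\notin\mathcal{L}(\boldsymbol{X})$, the contribution is $Q_{B}(\ell)$; in this case $\overset{\shortrightarrow}{f}$ contributes nothing, because its exponential runs only over elements of $\boldsymbol{X}$. Write the first case as $\langle\overset{\shortrightarrow}{f}b\rangle(\ell)\,p_{\boldsymbol{X}_{\texttt{+}}}(x,\ell)$. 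Dividing each label's contribution by $Q_{B}(\ell)+\langle\overset{\shortrightarrow}{f}b\rangle(\ell)$, which is a constant in $\boldsymbol{X}$, turns the pair of cases into $r(\ell)p(x,\ell)$ and $1-r(\ell)$ with $r=\lceil\langle\overset{\shortrightarrow}{f}b\rangle(\ell)/Q_{B}(\ell)\rfloor$. Together with $\Delta$, this is exactly the LMB form (\ref{eq:LMBexpX}).

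It remains to handle labels with $\ell\in\mathcal{L}(\boldsymbol{X}_{\texttt{+}}^{\mathbb{L}})$, which is where the exponent $1-|\boldsymbol{X}_{\texttt{+}}^{\{\ell\}}|$ comes from. For such a label the indicator already forces $\ell\in\mathcal{L}(\boldsymbol{X})$, so we may set $r(\ell)=1$; this is the value $\lceil\cdot\rfloor^{0}=1$. Setting $r(\ell)=1$ changes the normalising constant, but since $\boldsymbol{X}_{\texttt{+}}$ is fixed that change is again a constant, and it is harmless because the non-existence branch is killed by the indicator anyway. For $\ell\notin\mathcal{L}(\boldsymbol{X}_{\texttt{+}})$, the function $\overset{\shortrightarrow}{f}(\cdot;x,\ell)=Q_{S,\texttt{+}}(x,\ell)$ is a genuine function of $x$, and the computation above goes through with exponent 1. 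The density $p_{\boldsymbol{X}_{\texttt{+}}}$ is the same in both cases.

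The $\mathbb{L}_{\texttt{-}}$-part is handled identically, with $b$ replaced by $\overset{\shortleftarrow}{f}(\cdot;\boldsymbol{X}_{\texttt{-}})$ and $Q_{B}(\ell)$ replaced by $Q_{S}^{\boldsymbol{X}_{\texttt{-}}^{\{\ell\}}}$. One extra point needs care here. For $\ell\notin\mathcal{L}(\boldsymbol{X}_{\texttt{-}})$ we have $\overset{\shortleftarrow}{f}=1^{\emptyset}$, which would wrongly allow existence, so the indicator $\mathbf{1}_{\mathcal{L}(\boldsymbol{X}_{\texttt{-}})}^{\mathcal{L}(\boldsymbol{X})}$ has to be retained. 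The cleanest fix is to treat that indicator as fixing the domain $\mathcal{D}(\sigma)=\mathcal{L}(\boldsymbol{X}_{\texttt{-}})$ of the LMB, as in (\ref{eq:LMBexpX}). Equivalently, such labels are given $r=0$, which is consistent because $Q_{S}^{\emptyset}=1$ and the empty product of $\overset{\shortleftarrow}{f}$ is taken to be $0$ in the existence branch.

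I expect this bookkeeping of the degenerate cases to be the main obstacle: labels already forced alive by $\boldsymbol{X}_{\texttt{+}}$, labels outside $\mathcal{L}(\boldsymbol{X}_{\texttt{-}})$, and the requirement that the normalising constants be independent of $\boldsymbol{X}$. The factorisation itself is routine.
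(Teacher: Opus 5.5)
Your proposal is correct and follows essentially the paper's route. It uses the same factorisation via (\ref{eq:birthRFS1}), (\ref{eq:survival-RFS-X-1}) and (\ref{eq:survival-RFS-Xminus-1}) and the same splitting of the inclusion indicator over $\mathbb{B}\uplus\mathbb{L}_{\texttt{-}}$; your per-label normalisation, with $r(\ell)$ reset to $1$ for labels forced alive by $\boldsymbol{X}_{\texttt{+}}$, is exactly the content of Lemmas \ref{Lemma:LMB-Constraint-ID} and \ref{Lemma:LMB-ID-} done inline. Your explicit restriction of the survival LMB's domain to $\mathcal{L}(\boldsymbol{X}_{\texttt{-}})$ is more careful than the paper, which simply drops $\mathbf{1}_{\mathcal{L}(\boldsymbol{X}_{\texttt{-}})}^{\mathcal{L}(\boldsymbol{X}^{\mathbb{L}_{\texttt{-}}})}$, but your alternative of taking the empty product to be $0$ clashes with the convention $h^{\emptyset}=1$ and should be dropped in favour of the domain argument.
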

The resulting LMB is the product of two independent LMBs with disjoint
label sets: a birth LMB and a survival LMB. For a given set $\boldsymbol{X}_{\texttt{\textup{+}}}^{\mathbb{\mathbb{L}}}$
of surviving objects at the time $t+1$, the set inclusions $\mathbf{1}_{\mathcal{\mathcal{L}}(\boldsymbol{X}^{\mathbb{\mathbb{B}}})}^{\mathcal{L}(\boldsymbol{X}_{\texttt{\textup{+}}}^{\mathbb{\mathbb{L}}})\cap\mathbb{B}}$
and $\mathbf{1}_{\mathcal{\mathcal{L}}(\boldsymbol{X}^{\mathbb{\mathbb{L_{\texttt{-}}}}})}^{\mathcal{L}(\boldsymbol{X}_{\texttt{\textup{+}}}^{\mathbb{\mathbb{L}}})\cap\mathbb{\mathbb{L_{\texttt{-}}}}}$
ensure that the labels of the arguments $\boldsymbol{X}^{\mathbb{\mathbb{B}}}$
and $\boldsymbol{X}^{\mathbb{\mathbb{L_{\texttt{-}}}}}$ of the respective
LMBs include the labels of $\boldsymbol{X}_{\texttt{\textup{+}}}^{\mathbb{\mathbb{L}}}$
for label continuity. In addition, the cardinality $|\boldsymbol{X}_{\texttt{\textup{+}}}^{\{\ell\}}|$
in the existence probabilities ensures that any live label (that survives)
to time $t+1$ must have existence probability 1 (and hence also be
live) at time $t$.\textcolor{red}{{} }It is implicit from the term
$Q_{S}^{\boldsymbol{X}_{\texttt{-}}^{\{\ell\}}}$ in ${\color{lime}{\normalcolor {\normalcolor {\normalcolor r_{\boldsymbol{X}_{\texttt{\textup{+}}}^{\mathbb{\mathbb{L}}},\boldsymbol{X}_{\texttt{-}}}(\ell}\mathclose{\normalcolor )}}}}$
that the existence probabilities of survival LMB are not defined for
labels not contained in $\boldsymbol{\boldsymbol{X}_{\texttt{-}}}$.
This means any label that was not present at time $t-1$ cannot appear
at time $t$, unless it is generated as a birth. 
\begin{prop}
\label{Prop:Bernoulli-Conditional}Suppose that the multi-object density
$\boldsymbol{\pi}$ is proportional to the product of an LMB $\boldsymbol{\lambda}(\cdot;r,p)$
and an arbitrary function $\phi$ (of the multi-object state), i.e.,
\begin{alignat*}{1}
\boldsymbol{\pi}(\boldsymbol{X}) & \propto\phi(\boldsymbol{X})\boldsymbol{\lambda}(\boldsymbol{X};r,p).
\end{alignat*}
Then, the Bernoulli conditional distribution
\begin{alignat*}{1}
\boldsymbol{\pi}(\boldsymbol{X}^{\{\ell\}}|\boldsymbol{X}^{\{\bar{\ell}\}}) & \triangleq\frac{\boldsymbol{\pi}(\boldsymbol{X}^{\{\ell\}}\uplus\boldsymbol{X}^{\{\bar{\ell}\}})}{\int\boldsymbol{\pi}(\boldsymbol{Y}^{\{\ell\}}\uplus\boldsymbol{X}^{\{\bar{\ell}\}})\delta\boldsymbol{Y}^{\{\ell\}}},
\end{alignat*}
has existence probability and attribute density:
\begin{alignat*}{1}
r(\ell|\boldsymbol{X}^{\{\bar{\ell}\}}) & =\left\lceil \!\frac{r(\ell)\bigl\langle p(\cdot,\ell),\phi(\{(\cdot,\ell)\}\uplus\boldsymbol{X}^{\{\bar{\ell}\}})\bigr\rangle}{\left[1-r(\ell)\right]\phi(\boldsymbol{X}^{\{\bar{\ell}\}})}\!\right\rfloor \!,\\
p(x,\ell|\boldsymbol{X}^{\{\bar{\ell}\}}) & =\frac{p(x,\ell)\phi(\{(x,\ell)\}\uplus\boldsymbol{X}^{\{\bar{\ell}\}})}{\bigl\langle p(\cdot,\ell),\phi(\{(\cdot,\ell)\}\uplus\boldsymbol{X}^{\{\bar{\ell}\}})\bigr\rangle}.
\end{alignat*}
\end{prop}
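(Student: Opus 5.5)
The plan is to compute the Bernoulli conditional directly from its definition, by splitting $\boldsymbol{X}^{\{\ell\}}$ into its two possible values ($\emptyset$ and a singleton $\{(x,\ell)\}$) and reading off the existence probability and attribute density. First, fix $\boldsymbol{X}^{\{\bar{\ell}\}}$ with distinct labels, none equal to $\ell$. Then $\Delta(\boldsymbol{X}^{\{\ell\}}\uplus\boldsymbol{X}^{\{\bar{\ell}\}})=1$ for either choice of $\boldsymbol{X}^{\{\ell\}}$.

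Using the exponential form (\ref{eq:LMBexpX}), the LMB factorizes over labels, so that
\begin{alignat*}{1}
\boldsymbol{\lambda}(\emptyset\uplus\boldsymbol{X}^{\{\bar{\ell}\}};r,p) & =[1-r(\ell)]\,C,\\
\boldsymbol{\lambda}(\{(x,\ell)\}\uplus\boldsymbol{X}^{\{\bar{\ell}\}};r,p) & =r(\ell)p(x,\ell)\,C,
\end{alignat*}
where $C=[1-r]^{\mathcal{D}(\sigma)-\{\ell\}-\mathcal{L}(\boldsymbol{X}^{\{\bar{\ell}\}})}r^{\mathcal{L}(\boldsymbol{X}^{\{\bar{\ell}\}})}p^{\boldsymbol{X}^{\{\bar{\ell}\}}}$ does not depend on $\boldsymbol{X}^{\{\ell\}}$. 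If $\ell\notin\mathcal{D}(\sigma)$, then $r(\ell)=0$ and the same formulas still hold with $C$ unchanged. Multiplying by $\phi$, the unnormalized conditional is $[1-r(\ell)]\phi(\boldsymbol{X}^{\{\bar{\ell}\}})C$ at $\emptyset$ and $r(\ell)p(x,\ell)\phi(\{(x,\ell)\}\uplus\boldsymbol{X}^{\{\bar{\ell}\}})C$ at $\{(x,\ell)\}$.

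Next, evaluate the set integral in the denominator over the Bernoulli variable $\boldsymbol{Y}^{\{\ell\}}$ with label fixed at $\ell$. This is the empty-set term plus the integral over $x$ of the singleton term:
\[
C\Bigl([1-r(\ell)]\phi(\boldsymbol{X}^{\{\bar{\ell}\}})+r(\ell)\bigl\langle p(\cdot,\ell),\phi(\{(\cdot,\ell)\}\uplus\boldsymbol{X}^{\{\bar{\ell}\}})\bigr\rangle\Bigr).
\]
The factor $C$ and the proportionality constant of $\boldsymbol{\pi}$ cancel in the ratio.

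The existence probability is the total mass on singletons divided by this normalizer. This has exactly the form $u/(u+v)=\lceil u/v\rfloor$, with $u=r(\ell)\langle p,\phi\rangle$ and $v=[1-r(\ell)]\phi(\boldsymbol{X}^{\{\bar{\ell}\}})$. The attribute density is the singleton term divided by its mass, giving the stated $p(x,\ell|\boldsymbol{X}^{\{\bar{\ell}\}})$.

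The main obstacle is only bookkeeping rather than any substantive step. One must make precise that the set integral over $\boldsymbol{Y}^{\{\ell\}}$ is restricted to sets with label $\ell$, namely $\emptyset$ or $\{(x,\ell)\}$, so that no combinatorial factor arises. One must also note the degenerate cases where the normalizer vanishes, in which the conditional is undefined and excluded by assumption. Finally, the possibility $\ell\in\mathcal{L}(\boldsymbol{X}^{\{\bar{\ell}\}})$ is ruled out by definition of $\boldsymbol{X}^{\{\bar{\ell}\}}$.
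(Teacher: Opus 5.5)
Your proposal is correct and follows essentially the same route as the paper. Both arguments factor out the part of the LMB that does not depend on label $\ell$ (the paper's $K_\ell(\boldsymbol{X}^{\{\bar{\ell}\}})$, your $C$), evaluate the empty and singleton cases, sum them to get the normalizer, and read off the $u/(u+v)$ existence probability and the normalized singleton density. Your extra remarks on $\ell\notin\mathcal{D}(\sigma)$ and on a vanishing normalizer are harmless refinements that the paper leaves implicit.
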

Proposition \ref{Prop:Transition-Prod-LMB} asserts that the multi-object
conditional $\boldsymbol{\pi}_{t}(\boldsymbol{X}_{t}|\boldsymbol{X}_{0:t\texttt{-}1},\boldsymbol{X}_{t\textup{\texttt{\textup{+}}}1:k})$
is proportional to the product of an LMB and the multi-object likelihood.
Consequently, Proposition \ref{Prop:Bernoulli-Conditional} can be
invoked to determine analytic expressions for the (Bernoulli) conditional
of object $\ell$ at time $t$, summarized in Corollary \ref{Cor:Bernoulli-Conditional}.
This conditional is straightforward to evaluate and sample from, thereby
enabling efficient Gibbs updates. 

\textit{Remark}: In general, if the posterior $\boldsymbol{\pi}(\cdot)$
is proportional to some function $\boldsymbol{\varphi}(\cdot)$, then
the conditional given by $\boldsymbol{\pi}(\boldsymbol{X}{}^{\{\ell\}}|\boldsymbol{X}^{\{\bar{\ell}\}})$
is Bernoulli with
\begin{alignat*}{1}
r(\ell|\boldsymbol{X}^{\{\bar{\ell}\}}) & =\left\lceil \!\frac{\boldsymbol{\varphi}(\{(\cdot,\ell)\}\uplus\boldsymbol{X}^{\{\bar{\ell}\}})\bigr\rangle}{\boldsymbol{\varphi}(\boldsymbol{X}^{\{\bar{\ell}\}})}\!\right\rfloor \!,\\
p(x,\ell|\boldsymbol{X}^{\{\bar{\ell}\}}) & =\frac{\boldsymbol{\varphi}(\{(x,\ell)\}\uplus\boldsymbol{X}^{\{\bar{\ell}\}})}{\bigl\langle1,\boldsymbol{\varphi}(\{(\cdot,\ell)\}\uplus\boldsymbol{X}^{\{\bar{\ell}\}})\bigr\rangle}.
\end{alignat*}
Thus, a generic transition density can also be considered, but its
tractability depends on the specific form (for which the LMB case
above results in simple closed form).
\begin{cor}
\label{Cor:Bernoulli-Conditional}The existence probability and attribute
density of the (Bernoulli) conditional $\boldsymbol{\pi}_{t}^{(\ell)}(\boldsymbol{X}_{t}^{\{\ell\}}|\boldsymbol{X}_{0:t\texttt{-}1},\boldsymbol{X}_{t}^{\{\bar{\ell}\}},\boldsymbol{X}_{t\textup{\texttt{\textup{+}}}1:k})$
of object $\ell$ at time $t$, is given by 
\begin{alignat*}{1}
r_{t}^{(\ell)}(\boldsymbol{X}^{\{\bar{\ell}\}}\!,\boldsymbol{X}_{\!\pm}) & =\left\lceil \!\frac{{\scriptstyle r(\ell|\boldsymbol{X}_{\!\pm})\bigl\langle p(\cdot,\ell|\boldsymbol{X}_{\!\pm}),\boldsymbol{g}(z|\{(\cdot,\ell)\}\uplus\boldsymbol{X}^{\{\bar{\ell}\}})\bigr\rangle}}{{\scriptstyle \left[1-r(\ell|\boldsymbol{X}_{\pm})\right]\boldsymbol{g}(z|\boldsymbol{X}^{\{\bar{\ell}\}})}}\!\right\rfloor \!,\\
p_{t}^{(\ell)}(x|\boldsymbol{X}^{\{\bar{\ell}\}}\!,\boldsymbol{X}_{\pm}) & =\frac{{\scriptstyle p(x,\ell|\boldsymbol{X}_{\pm})\boldsymbol{g}(z|\{(x,\ell)\}\uplus\boldsymbol{X}^{\{\bar{\ell}\}})}}{{\scriptstyle \bigl\langle p(\cdot,\ell|\boldsymbol{X}_{\pm}),\boldsymbol{g}(z|\{(\cdot,\ell)\}\uplus\boldsymbol{X}^{\{\bar{\ell}\}})\bigr\rangle}},
\end{alignat*}
where
\begin{alignat*}{1}
r(\ell|\boldsymbol{X}_{\pm}) & =\begin{cases}
\!\!\!\begin{array}{ll}
\left\lceil \!\frac{\langle\overset{\shortrightarrow}{f}(\boldsymbol{X}_{\texttt{\textup{+}}}^{\mathbb{\mathbb{L}}};\cdot,\ell),b(\cdot,\ell)\rangle}{Q_{B}(\ell)}\!\right\rfloor ^{1-|\boldsymbol{X}_{\texttt{\textup{+}}}^{\{\ell\}}|}, & \!\!\!\ell\in\mathbb{B}\\
\left\lceil \!\frac{\langle\overset{\shortrightarrow}{f}(\boldsymbol{X}_{\texttt{\textup{+}}}^{\mathbb{\mathbb{L}}};\cdot,\ell),\overset{\shortleftarrow}{f}(\cdot,\ell;\boldsymbol{X}_{\texttt{-}})\rangle}{Q_{S}^{\boldsymbol{X}_{\texttt{-}}^{\{\ell\}}}}\!\right\rfloor ^{1-|\boldsymbol{X}_{\texttt{\textup{+}}}^{\{\ell\}}|}, & \!\!\!\ell\in\mathbb{L}_{\texttt{-}}
\end{array}\!\!\!\!\end{cases}\\
p(x,\ell|\boldsymbol{X}_{\pm}) & =\begin{cases}
\!\!\!\begin{array}{ll}
\frac{\overset{\shortrightarrow}{f}(\boldsymbol{X}_{\texttt{\textup{+}}}^{\mathbb{\mathbb{L}}};x,\ell)b(x,\ell)}{\langle\overset{\shortrightarrow}{f}(\boldsymbol{X}_{\texttt{\textup{+}}}^{\mathbb{\mathbb{L}}};\cdot,\ell),b(\cdot,\ell)\rangle}, & \!\!\!\ell\in\mathbb{B}\\
\frac{\overset{\shortrightarrow}{f}(\boldsymbol{X}_{\texttt{\textup{+}}}^{\mathbb{\mathbb{L}}};x,\ell)\overset{\shortleftarrow}{f}(x,\ell;\boldsymbol{X}_{\texttt{-}})}{\langle\overset{\shortrightarrow}{f}(\boldsymbol{X}_{\texttt{\textup{+}}}^{\mathbb{\mathbb{L}}};\cdot,\ell),\overset{\shortleftarrow}{f}(\cdot,\ell;\boldsymbol{X}_{\texttt{-}})\rangle}, & \!\!\!\ell\in\mathbb{L}_{\texttt{-}}
\end{array}\!\!\!.\!\end{cases}
\end{alignat*}
 
\end{cor}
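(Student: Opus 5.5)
The plan is to combine Proposition \ref{Prop:Transition-Prod-LMB} with Proposition \ref{Prop:Bernoulli-Conditional}, taking $\phi$ to be the multi-object likelihood. First, by the Markov factorization displayed before Proposition \ref{Prop:Transition-Prod-LMB}, the conditional $\boldsymbol{\pi}_{t}(\boldsymbol{X}_{t}|\boldsymbol{X}_{0:t\texttt{-}1},\boldsymbol{X}_{t\texttt{+}1:k})$ is proportional, as a function of $\boldsymbol{X}=\boldsymbol{X}_{t}$, to $\boldsymbol{g}(z|\boldsymbol{X})\,\boldsymbol{f}_{\texttt{+}}(\boldsymbol{X}_{\texttt{+}}|\boldsymbol{X})\boldsymbol{f}(\boldsymbol{X}|\boldsymbol{X}_{\texttt{-}})$. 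The Bernoulli conditional (\ref{eq:conditional-l-t}) is a ratio of this quantity to its set integral over $\boldsymbol{Y}^{\{\ell\}}$. Hence any factors not depending on $\boldsymbol{X}$, including the normalizer in (\ref{eq:conditional-t}), cancel, and it suffices to work with this unnormalized product.

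Next I apply Proposition \ref{Prop:Transition-Prod-LMB}. Since $\mathbb{L}=\mathbb{B}\uplus\mathbb{L}_{\texttt{-}}$, the product of the two LMBs with disjoint label sets, $\boldsymbol{\lambda}(\boldsymbol{X}^{\mathbb{B}};r_{\boldsymbol{X}_{\texttt{+}}},p_{\boldsymbol{X}_{\texttt{+}}})\,\boldsymbol{\lambda}(\boldsymbol{X}^{\mathbb{L}_{\texttt{-}}};r_{\boldsymbol{X}_{\texttt{+}},\boldsymbol{X}_{\texttt{-}}},p_{\boldsymbol{X}_{\texttt{+}},\boldsymbol{X}_{\texttt{-}}})$, is itself a single LMB $\boldsymbol{\lambda}(\boldsymbol{X};r(\cdot|\boldsymbol{X}_{\pm}),p(\cdot|\boldsymbol{X}_{\pm}))$ on $\mathbb{L}$. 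Its parameters are defined piecewise: the birth formula applies for $\ell\in\mathbb{B}$ and the survival formula for $\ell\in\mathbb{L}_{\texttt{-}}$. This is a direct check from form (\ref{eq:LMBexpX}), since $\Delta$, the exponentials $[1-r]^{\cdot}$, $r^{\cdot}$ and $p^{\boldsymbol{X}}$ all factor over disjoint label sets.

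The remaining inclusion indicators $\mathbf{1}^{\mathcal{L}(\boldsymbol{X}_{\texttt{+}}^{\mathbb{L}})\cap\mathbb{B}}_{\mathcal{L}(\boldsymbol{X}^{\mathbb{B}})}$ and their $\mathbb{L}_{\texttt{-}}$ analogue do not need to be absorbed into $\phi$. Whenever $\ell\in\mathcal{L}(\boldsymbol{X}_{\texttt{+}})$, the exponent $1-|\boldsymbol{X}_{\texttt{+}}^{\{\ell\}}|=0$ forces $r(\ell|\boldsymbol{X}_{\pm})=1$, so the LMB already vanishes when $\ell$ is absent. The indicators are therefore redundant; at worst they can be included in $\phi$ without changing anything below, because they are constant in $\boldsymbol{X}^{\{\ell\}}$ on the support. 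I would state this explicitly.

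Finally, I invoke Proposition \ref{Prop:Bernoulli-Conditional} with $\phi(\boldsymbol{X})=\boldsymbol{g}(z|\boldsymbol{X})$ and the LMB parameters $r(\ell|\boldsymbol{X}_{\pm})$ and $p(\cdot,\ell|\boldsymbol{X}_{\pm})$. Substituting gives exactly $r_{t}^{(\ell)}$ and $p_{t}^{(\ell)}$ as stated, with $\phi(\boldsymbol{X}^{\{\bar{\ell}\}})=\boldsymbol{g}(z|\boldsymbol{X}^{\{\bar{\ell}\}})$.

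The main obstacle is the degenerate cases, where the $\lceil\cdot\rfloor$ notation must be interpreted carefully:
\begin{itemize}
\item when $r(\ell|\boldsymbol{X}_{\pm})=1$, the ratio is read as $u/(u+0)=1$;
\item when $\ell\in\mathbb{L}_{\texttt{-}}$ is absent from $\boldsymbol{X}_{\texttt{-}}$, the existence probability is $0$, consistent with the inclusion constraint in (\ref{eq:survival-RFS-Xminus-1});
\item the cases $t=0$ and $t=k$ need separate treatment. For $t=k$ there is no $\boldsymbol{f}_{\texttt{+}}$, so one sets $\overset{\shortrightarrow}{f}\equiv1$ and $|\boldsymbol{X}_{\texttt{+}}^{\{\ell\}}|=0$.
\end{itemize}
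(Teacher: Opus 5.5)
Your proposal is correct and follows the paper's route. You reduce the conditional via the Markov factorization to $\boldsymbol{g}(z|\boldsymbol{X})\boldsymbol{f}_{\texttt{+}}(\boldsymbol{X}_{\texttt{+}}|\boldsymbol{X})\boldsymbol{f}(\boldsymbol{X}|\boldsymbol{X}_{\texttt{-}})$, use Proposition~\ref{Prop:Transition-Prod-LMB} to obtain an LMB in $\boldsymbol{X}$, and then apply Proposition~\ref{Prop:Bernoulli-Conditional} with $\phi=\boldsymbol{g}(z|\cdot)$. The paper justifies the corollary only by a one-line appeal to these two propositions, so your merging of the birth and survival LMBs over the disjoint label sets $\mathbb{B}$ and $\mathbb{L}_{\texttt{-}}$, and your observation that the inclusion indicators are redundant because $r(\ell|\boldsymbol{X}_{\pm})=1$ for $\ell\in\mathcal{L}(\boldsymbol{X}_{\texttt{+}})$, fill in details the paper leaves implicit.
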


\subsection{Multi-Object Posterior Gibbs Sampler\protect\label{subsec:MO-Posterior-GS}}

Having derived explicit conditionals for the multi-object posterior,
we now proceed to the pseudo code for the resulting Gibbs sampler.
Algorithm 1 describes the Markov chain for a given initial multi-object
trajectory. The overall flow of execution follows the qualitative
description given in Subsection \ref{subsec:GS-MO-Posterior} and
uses the conditional distribution derived in Subsection \ref{subsec:cond-GS}.
Algorithm 1a is the outer loop which traverses forwards or backwards
over time to sample a multi-object trajectory. Algorithm 1b is the
inner loop which traverses over all possible labels for a given time
to sample a multi-object state. The core of the Gibbs sampler is the
LMB conditional, which, for a given label, proposes a new single object
sample, either as an empty set or singleton. 

In Algorithm 1a, the evaluation of the loop iterations for times $t=0$
and $t=k$ are treated as special cases since it is implicit that
the inputs \textsf{$\boldsymbol{X}_{t-1}^{(i)}$} and \textsf{$\boldsymbol{X}_{t+1}^{(i)}$}
respectively are ignored. Thus in Algorithm 1b, the case of $t=0$
reduces to 
\begin{eqnarray*}
p(\cdot,\ell|\boldsymbol{X}_{\pm}) & :\propto & \begin{cases}
\!\!\!\begin{array}{ll}
\overset{\shortrightarrow}{f}(\boldsymbol{X}_{\texttt{\textup{+}}}^{\mathbb{\mathbb{L}}};\cdot,\ell)b(\cdot,\ell) & \!\!\!\ell\in\mathbb{B}\\
\overset{\shortrightarrow}{f}(\boldsymbol{X}_{\texttt{\textup{+}}}^{\mathbb{\mathbb{L}}};\cdot,\ell) & \!\!\!\ell\in\mathbb{L}_{\texttt{-}}
\end{array}\!\!\!\!\end{cases}\\
r(\ell|\boldsymbol{X}_{\pm}) & := & \begin{cases}
\!\!\!\begin{array}{ll}
\left\lceil \!\frac{\langle\overset{\shortrightarrow}{f}(\boldsymbol{X}_{\texttt{\textup{+}}}^{\mathbb{\mathbb{L}}};\cdot,\ell),b(\cdot,\ell)\rangle}{Q_{B}(\ell)}\!\right\rfloor ^{1-|\boldsymbol{X}_{\texttt{\textup{+}}}^{\{\ell\}}|} & \!\!\!\ell\in\mathbb{B}\\
\left\lceil \!\frac{\langle\overset{\shortrightarrow}{f}(\boldsymbol{X}_{\texttt{\textup{+}}}^{\mathbb{\mathbb{L}}};\cdot,\ell),1\rangle}{Q_{S}^{\boldsymbol{X}_{\texttt{-}}^{\{\ell\}}}}\!\right\rfloor ^{1-|\boldsymbol{X}_{\texttt{\textup{+}}}^{\{\ell\}}|} & \!\!\!\ell\in\mathbb{L}_{\texttt{-}}
\end{array}\!\!\!\!\end{cases}
\end{eqnarray*}
and the case of $t=k$ reduces to

\begin{eqnarray*}
p(\cdot,\ell|\boldsymbol{X}_{\pm}) & :\propto & \begin{cases}
\!\!\!\begin{array}{ll}
b(\cdot,\ell) & \!\!\!\ell\in\mathbb{B}\\
\overset{\shortleftarrow}{f}(\cdot,\ell;\boldsymbol{X}_{\texttt{-}}) & \!\!\!\ell\in\mathbb{L}_{\texttt{-}}
\end{array}\!\!\!\!\end{cases}\\
r(\ell|\boldsymbol{X}_{\pm}) & := & \begin{cases}
\!\!\!\begin{array}{ll}
P_{B}(\ell) & \!\!\!\ell\in\mathbb{B}\\
\left\lceil \!\frac{\langle1,\overset{\shortleftarrow}{f}(\cdot,\ell;\boldsymbol{X}_{\texttt{-}})\rangle}{Q_{S}^{\boldsymbol{X}_{\texttt{-}}^{\{\ell\}}}}\!\right\rfloor  & \!\!\!\ell\in\mathbb{L}_{\texttt{-}}
\end{array}\!\!\!\!\end{cases}
\end{eqnarray*}
Note the slight abuse of notation in the conceptual overloading of
the variables, where $\mathbb{L}$ effectively restricts the main
loop to active labels (hence carries a different meaning from Section
\ref{sec:Background} Subsection \ref{subsec:cond-GS}).

The multi-object trajectory Gibbs sampler requires initialization
with any valid multi-object trajectory, including the empty trajectory
which is equivalent to the sequence of empty sets. A typically more
efficient scheme is to employ a so-called factor sampler which is
analogous to recursive filtering while accumulating previous time
samples. The factor sampler can be implemented as special case of
the Gibbs sampler for an expanding time window. The resulting initialization
routine is described in Algorithm 2 where the outer loop runs forward
in time and the inner loop is akin to a filtering step. To save repetition
in the description of Algorithm 2, the core sampling routine reuses
Algorithm 1b by substituting an empty argument for $\boldsymbol{X}_{t+1}^{(i)}$,
which similarly implements the special case $t=k$ given above.

The proposed multi-object sampler is a single\nobreakdash-site Gibbs
sampler with a systematic scan that updates coordinates in a fixed
order, drawing exactly from the corresponding full conditional at
each step. The composed kernel preserves the stationary distribution
$\boldsymbol{\pi}_{0:k}$ and, under standard conditions (irreducibility,
aperiodicity, proper conditionals), the chain is Harris ergodic; with
mild drift--minorization conditions, geometric ergodicity follows
(\cite{Tierney1994,RobertsSmith1994,RobertCasella2004,MeynTweedie2009}).
Note that a random\nobreakdash-scan variant that selects coordinate
$(\ell,t)$ with probability $p_{\ell,t}$, is $\boldsymbol{\pi}_{0:k}$-reversible
and inherits the same convergence properties (\cite{Tierney1994,Liu2001,RobertCasella2004}).
Systematic-scan is simpler to implement and can mix well in practice,
but in theory reversibility may be lost. On the other hand, random-scan
offers reversibility of the overall chain, but can be slow-mixing
in practice.

\begin{figure}[htbp]
\begin{tabular}{>{\raggedright}p{8.6cm}}
\toprule 
\textbf{\textit{Algorithm 1a. MOT Gibbs Sampler (FW/BW)}}\tabularnewline
\midrule 
\textsf{Global:} $\{P_{B}(\cdot),b(\cdot,\cdot)\},P_{S}(\cdot,\cdot),\overset{\shortrightarrow}{f}(\cdot|\cdot),\overset{\shortleftarrow}{f}(\cdot|\cdot),z_{1:k},g(\cdot|\cdot)$\tabularnewline
\textsf{Input:\ \ \ \  -}\tabularnewline
\textsf{Output:} $\{\boldsymbol{X}_{0:k}\}_{i=1}^{I_{max}}$\tabularnewline
\midrule
$\boldsymbol{X}_{0:k}^{(0)}:=$\textsf{Initialize$()$}\tabularnewline
\textsf{for} $i=1:I_{max}$\tabularnewline
\thickspace{}\thickspace{}\textsf{for} $t=0:1:k$ \textsf{OR} $t=k:-1:0$\tabularnewline
\thickspace{}\thickspace{}\thickspace{}\thickspace{}$\boldsymbol{X}_{t}^{(i)}:=$\textsf{MOSGibbsSampler}$(\boldsymbol{X}_{t-1}^{(i)},\boldsymbol{X}_{t}^{(i-1)},\boldsymbol{X}_{t+1}^{(i-1)})$\tabularnewline
\thickspace{}\thickspace{}\thickspace{}\thickspace{}\thickspace{}\thickspace{}\thickspace{}\thickspace{}\thickspace{}\thickspace{}\thickspace{}\thickspace{}\thickspace{}\thickspace{}\thickspace{}\thickspace{}\thickspace{}\thickspace{}\thickspace{}\thickspace{}\thickspace{}\thickspace{}
\thickspace{}\thickspace{} \thickspace{}\thickspace{} \textsf{OR}\tabularnewline
\thickspace{}\thickspace{}\thickspace{}\thickspace{}$\boldsymbol{X}_{t}^{(i)}:=$\textsf{MOSGibbsSampler}$(\boldsymbol{X}_{t-1}^{(i-1)},\boldsymbol{X}_{t}^{(i-1)},\boldsymbol{X}_{t+1}^{(i)})$\tabularnewline
\thickspace{}\thickspace{}\textsf{end}\tabularnewline
\thickspace{}\thickspace{}$\boldsymbol{X}_{0:k}^{(i)}:=\left[\boldsymbol{X}_{0}^{(i)},...,\boldsymbol{X}_{k}^{(i)}\right]$\tabularnewline
\textsf{end}\tabularnewline
\bottomrule
\end{tabular}

\medskip{}

\begin{tabular}{>{\raggedright}p{8.6cm}}
\toprule 
\textbf{\textit{Algorithm 1b. MOS Gibbs Sampler}}\tabularnewline
\midrule 
\textsf{Input:} \textsf{\ \ \ }$\boldsymbol{X}_{-},\boldsymbol{X},\boldsymbol{X}_{+}$\tabularnewline
\textsf{Output:} $\boldsymbol{X}^{\prime}$\tabularnewline
\midrule
$\mathbb{L}:=\mathcal{L}(\boldsymbol{X}_{-})\cup\mathbb{B}$\tabularnewline
\textsf{for} $\ell_{}\in\mathbb{L}$\tabularnewline
\thickspace{}\thickspace{}$\boldsymbol{X}^{\{\bar{\ell}\}}:=\boldsymbol{X}-\boldsymbol{X}^{\{\ell\}}$\tabularnewline
\thickspace{}\thickspace{}$p(\cdot,\ell|\boldsymbol{X}_{\pm}):\propto\begin{cases}
\!\!\!\begin{array}{ll}
\overset{\shortrightarrow}{f}(\boldsymbol{X}_{\texttt{\textup{+}}}^{\mathbb{\mathbb{L}}};\cdot,\ell)b(\cdot,\ell) & \!\!\!\ell\in\mathbb{B}\\
\overset{\shortrightarrow}{f}(\boldsymbol{X}_{\texttt{\textup{+}}}^{\mathbb{\mathbb{L}}};\cdot,\ell)\overset{\shortleftarrow}{f}(\cdot,\ell;\boldsymbol{X}_{\texttt{-}}) & \!\!\!\ell\in\mathbb{L}_{\texttt{-}}
\end{array}\!\!\!\!\end{cases}$\tabularnewline
\thickspace{}\thickspace{}$r(\ell|\boldsymbol{X}_{\pm}):=\begin{cases}
\!\!\!\begin{array}{ll}
\left\lceil \!\frac{\langle\overset{\shortrightarrow}{f}(\boldsymbol{X}_{\texttt{\textup{+}}}^{\mathbb{\mathbb{L}}};\cdot,\ell),b(\cdot,\ell)\rangle}{Q_{B}(\ell)}\!\right\rfloor ^{1-|\boldsymbol{X}_{\texttt{\textup{+}}}^{\{\ell\}}|} & \!\!\!\ell\in\mathbb{B}\\
\left\lceil \!\frac{\langle\overset{\shortrightarrow}{f}(\boldsymbol{X}_{\texttt{\textup{+}}}^{\mathbb{\mathbb{L}}};\cdot,\ell),\overset{\shortleftarrow}{f}(\cdot,\ell;\boldsymbol{X}_{\texttt{-}})\rangle}{Q_{S}^{\boldsymbol{X}_{\texttt{-}}^{\{\ell\}}}}\!\right\rfloor ^{1-|\boldsymbol{X}_{\texttt{\textup{+}}}^{\{\ell\}}|} & \!\!\!\ell\in\mathbb{L}_{\texttt{-}}
\end{array}\!\!\!\!\end{cases}$\tabularnewline
\thickspace{}\thickspace{}$p^{(\ell)}(\cdot|\boldsymbol{X}^{\{\bar{\ell}\}}\!,\boldsymbol{X}_{\pm}):\propto p(\cdot,\ell|\boldsymbol{X}_{\pm})\boldsymbol{g}(z|\{(\cdot,\ell)\}\uplus\boldsymbol{X}^{\{\bar{\ell}\}})$

\thickspace{}\thickspace{}\thickspace{}\thickspace{}\thickspace{}\thickspace{}\thickspace{}\thickspace{}\thickspace{}\thickspace{}\thickspace{}\thickspace{}\thickspace{}\thickspace{}\thickspace{}\thickspace{}\thickspace{}\thickspace{}\thickspace{}\thickspace{}\thickspace{}\thickspace{}\thickspace{}\thickspace{}\thickspace{}\thickspace{}\thickspace{}\thickspace{}\thickspace{}\thickspace{}\thickspace{}\thickspace{}\thickspace{}\textsf{{[}via
PF alg{]}}\tabularnewline
\thickspace{}\thickspace{}$r^{(\ell)}(\boldsymbol{X}^{\{\bar{\ell}\}}\!,\boldsymbol{X}_{\!\pm}):=\left\lceil \!\frac{{\scriptstyle r(\ell|\boldsymbol{X}_{\!\pm})\bigl\langle p(\cdot,\ell|\boldsymbol{X}_{\!\pm}),\boldsymbol{g}(z|\{(\cdot,\ell)\}\uplus\boldsymbol{X}^{\{\bar{\ell}\}})\bigr\rangle}}{{\scriptstyle \left[1-r(\ell|\boldsymbol{X}_{\pm})\right]\boldsymbol{g}(z|\boldsymbol{X}^{\{\bar{\ell}\}})}}\!\right\rfloor $\tabularnewline
\thickspace{}\thickspace{}$\LyXZeroWidthSpace\boldsymbol{X}^{\{\ell\}}:=\emptyset$\tabularnewline
\thickspace{}\thickspace{}\textsf{if} \textsf{Uniform}$(0,1)<r^{(\ell)}(\boldsymbol{X}^{\{\bar{\ell}\}}\!,\boldsymbol{X}_{\!\pm})$\tabularnewline
\thickspace{}\thickspace{}\thickspace{}\thickspace{}$x\sim p^{(\ell)}(\cdot|\boldsymbol{X}^{\{\bar{\ell}\}}\!,\boldsymbol{X}_{\pm})$\tabularnewline
\thickspace{}\thickspace{}\thickspace{}\thickspace{}$\LyXZeroWidthSpace\boldsymbol{X}^{\{\ell\}}:=\{(x,\ell)\}$\tabularnewline
\thickspace{}\thickspace{}\textsf{end}\tabularnewline
\thickspace{}\thickspace{}$\LyXZeroWidthSpace\boldsymbol{X}:=\boldsymbol{X}^{\{\ell\}}\cup\boldsymbol{X}^{\{\bar{\ell}\}}$\tabularnewline
\textsf{end}\tabularnewline
$\boldsymbol{X}_{}^{\prime}:=\boldsymbol{X}$\tabularnewline
\bottomrule
\end{tabular}
\end{figure}

\begin{figure}[htbp]
\begin{tabular}{>{\raggedright}p{8.6cm}}
\toprule 
\textbf{\textit{Algorithm 2. MOT Factor Sampler}}\tabularnewline
\midrule 
\textsf{Global:} $\{P_{B}(\cdot),b(\cdot,\cdot)\},P_{S}(\cdot,\cdot),\overset{\shortleftarrow}{f}(\cdot|\cdot),z_{1:k},g(\cdot|\cdot)$\tabularnewline
\textsf{Input: \ \ \ -}\tabularnewline
\textsf{Output:} $\boldsymbol{X}_{0:k}=\left[\boldsymbol{X}_{0},...,\boldsymbol{X}_{k}\right]$\tabularnewline
\midrule
$\boldsymbol{X}_{0}:=\emptyset$\tabularnewline
\textsf{for} $t=1:k$\tabularnewline
\thickspace{}\thickspace{}$\boldsymbol{X}_{t}:=\emptyset$\tabularnewline
\thickspace{}\thickspace{}\textsf{for} $i=1:I_{max}$\tabularnewline
\thickspace{}\thickspace{}\thickspace{}\thickspace{}$\boldsymbol{X}_{t}:=$\textsf{MOSGibbsSampler}$(\boldsymbol{X}_{t-1},\boldsymbol{X}_{t},[])$\tabularnewline
\thickspace{}\thickspace{}\textsf{end}\tabularnewline
\textsf{end}\tabularnewline
$\boldsymbol{X}_{0:k}:=\left[\boldsymbol{X}_{0},...,\boldsymbol{X}_{k}\right]$\tabularnewline
\bottomrule
\end{tabular}

\medskip{}
\end{figure}

The proposed Gibbs sampler requires sampling from an unnormalized
single-object posterior, given a notional prior and likelihood. Since
a multi-object trajectory is composed of many correlated single-object
trajectories, high efficiency at the single-object level is crucial
for overall tractability. Any method capable of generating single-trajectory
particles can be used. For the case study in the next section, we
use Particle Flow (also known as log-homotopy) to exploit the underlying
smoothness of the single-object model, and provide a more accurate
posterior approximation with a relatively small number of samples
\cite{DaumHuang2007LogHomotopy}, \cite{DaiDaum2023StochasticPF}.
Details are given in Appendix \ref{subsec:Particle-Flow}.

\section{Superpositional Measurement Case Study\protect\label{sec:Superpositional}}

As an example application of the proposed multi-object posterior Gibbs
sampler, this section presents a case study on multi-object estimation
with superpositional measurements--each measurement is a field whose
intensity is a sum of contributions from multiple objects. In many
sensing modalities (e.g., low\nobreakdash-SNR radar/sonar power maps,
EO/IR image intensities, radio\nobreakdash-tomographic fields, atmospheric
concentration maps), the observation at each spatial sample (pixel/beam/bin)
is a superposition of contributions from all objects within the sensor\textquoteright s
footprint plus background/noise \cite{MahlerSCPHD09,NCM2013Computationally,PapiKim-15,LWLGZ2020Robust}.

\subsection{Experiment Settings}

The measurement is an image $z_{t}=(z_{t,m})_{m=1}^{M}$ of $M$
cells (pixels, range--Doppler cells, etc.), that are conditionally
independent given the multi-object state, i.e,
\begin{alignat*}{1}
\boldsymbol{g}_{t}(z_{t}|\boldsymbol{X}_{t}) & =\prod_{m=1}^{M}\boldsymbol{g}_{t,m}(z_{t,m}|\boldsymbol{X}_{t}).
\end{alignat*}
In this example we use
\begin{alignat*}{1}
\boldsymbol{g}_{t,m}(z_{t,m}|\boldsymbol{X}_{t}) & =\mathcal{N}\left(z_{t,m}-\sum_{\boldsymbol{x}\in\boldsymbol{X}_{t}}A_{m}(\boldsymbol{x});0,1\right),
\end{alignat*}
where the point spread function is given by
\begin{alignat*}{1}
A_{m}(\boldsymbol{x}) & =\frac{\Delta_{x}\Delta_{y}I_{0}}{2\pi\sigma_{s}^{2}}e^{\left(-\frac{(p_{x}-m_{x})^{2}+(p_{y}-m_{y}){}^{2}}{2\sigma_{s}^{2}}\right)}
\end{alignat*}
with $(p_{x},p_{y})$ denoting the position of state $\boldsymbol{x}$,
$(m_{x},m_{y})$ the position of pixel $m$, $\Delta_{x}$ and $\Delta_{y}$
the respective horizontal and vertical pixel width, $I_{0}$ the source
amplitude, and $\sigma_{s}^{2}$ the blurring factor. 
\begin{figure}[t]
\begin{centering}
\resizebox{80mm}{!}{\includegraphics[clip]{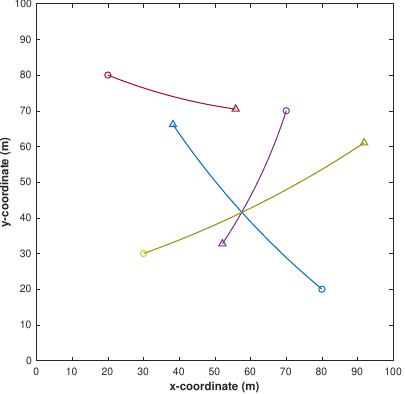}}
\par\end{centering}
\caption{\protect\label{fig:gt} Ground truth: Four trajectories, with starting
and end points annotated by '${\scriptstyle \ocircle}$' and '${\scriptstyle {\textstyle \vartriangle}}$'.
These objects appear and disappear at different times, with three
of them crossing at $t=60s$. }
\vspace{0mm}
\end{figure}

We consider a $100m\times100m$ surveillance region, which is observed
as a $100\times100$ pixel image (i.e., $M=10000$ pixels) with $\Delta_{x}=\Delta_{y}=1m$
and $\sigma_{s}^{2}=1$.\textcolor{violet}{{} }We also examine a range
of source amplitudes $I_{0}=15,20,25,30,35$, which effectively varies
the signal-to-noise ratio.
\begin{figure*}[t]
\begin{centering}
\resizebox{160mm}{!}{\includegraphics[clip]{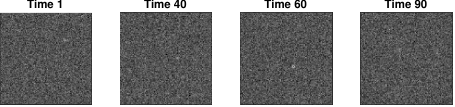}}
\par\end{centering}
\caption{\protect\label{fig:meas}Sample superpositional image observations
at times $k=1s,40s,60s,90s$, for source level $I_{0}=15$.}
\vspace{0mm}
\end{figure*}

An object state is a 5D vector of planar position, velocity and turn
rate $x_{t}=[p_{t,x},p_{t,y},v_{t,x},v_{t,y},\omega_{t}]^{T}$. Each
object follows a nearly constant turn model given by
\begin{eqnarray*}
x_{t} & = & F(\omega_{t-1})x_{t-1}+Gw_{t-1},\\
\omega_{t} & = & \omega_{t-1}+\iota u_{t-1},
\end{eqnarray*}
where 
\begin{alignat*}{1}
F(\omega)=\left[\begin{array}{cccc}
1 & 0 & \frac{\sin\omega\iota}{\omega} & -\frac{1-\cos\omega\iota}{\omega}\\
0 & 1 & \frac{1-\cos\omega\iota}{\omega} & \frac{\sin\omega\iota}{\omega}\\
0 & 0 & \cos\omega\iota & -\sin\omega\iota\\
0 & 0 & \sin\omega\iota & \cos\omega\iota
\end{array}\right]\!\!,\:\: & G=\left[\begin{array}{cc}
\frac{\iota^{2}}{2} & 0\\
0 & \frac{\iota^{2}}{2}\\
\iota & 0\\
0 & \iota
\end{array}\right]\!\!,
\end{alignat*}
$\iota=1s$, $w_{t}\sim\mathcal{N}(\cdot;0,\sigma_{w}^{2}I_{2})$,
$\sigma_{w}=0.5m/s^{2}$, $u_{t}\sim\mathcal{N}(\cdot;0,\sigma_{u}^{2})$,
$\sigma_{u}^ {}=(\pi/360)rad/s$. Object births follow a 4 component
LMB model $\{P_{B},p_{B}(\cdot,\ell_{i})\}_{i=1}^{4}$ where $P_{B}=0.01$
and $p_{B}(\cdot,\ell_{i})=\mathcal{N}(\cdot;m^{(i)},C)$ with $m^{(1)}=[20,80,0,0,0]^{T}$,
$m^{(2)}=[70,70,0,0,0]^{T}$, $m^{(3)}=[80,20,0,0,0]^{T}$, $m^{(4)}=[30,30,0,0,0]^{T}$,
$C=diag([1,1,1,1,\pi/180]^{2})$. 

The scenario duration is 100s with 4 trajectories appearing and disappearing
at different times as illustrated in Figure \ref{fig:gt}. There is
one birth from each of the birth locations respectively; label (1,
1) is present between $t=1:70s$, label (1, 2) is present between
$t=1:80s$, label (20, 3) is present between $t=20:100s$, label (30,
4) is present between $t=30:100s$. Thus 4 objects are simultaneously
present between $t=30:70s$. Figure \ref{fig:meas} shows samples
of the image observations for $I_{0}=15$. Note the crossing of 3
trajectories at time $t=60s$ results in nearby pixels having non-negligible
amplitude contributions from multiple objects. 

The multi-object posterior Gibbs sampler is run for 100 burn-in steps
and the subsequent 1000 samples are retained to represent the posterior
distribution. Successive iterations of the Gibbs sampler are run with
alternating forward and backward time passes which empirically is
seen to induce improved mixing. Initialization is carried out with
the factor sampling routine,\textcolor{blue}{{} }which empirically appears
to reduce burn-in time significantly, at least compared to an empty
or trivial initialization.\textcolor{violet}{{} }Sampling from unnormalized
single object posterior is implemented via particle flow or log homotopy
(refer to Appendix \ref{subsec:Particle-Flow} for details). 

Amongst the many multi-object posterior inference tasks, the most
popular and intuitive is multi-object trajectory estimation. For this
we adopt the \textit{label-MaM} estimate \cite{VoVoNguyenShimo-Overview24},
which is based on the label set $L^{*}$ with highest joint existence
probability over the duration of interest (if this label set is not
unique we select one with the most probable cardinality). Since each
$\ell$ $\in L^{\ast}$, has a range of possible trajectory supports
(the set of instances where this label exists), we use the most probable
support, and the mean of the attribute density at each instance of
this support.

\subsection{Multi-Object Trajectory Estimates}

For the purposes of providing a baseline performance validation, we
compare the multi-object trajectory estimate from the posterior with
superpositional measurement against the standard detection-based GLMB
smoother \cite{VoVomultiscan18}. To facilitate the comparison, a
simple fixed threshold detector is used, followed by basic clustering
on the detector output. The result is a set of noisy range and bearing
detections. The noise on the detections are modeled as additive zero
mean Gaussian with range and bearing standard deviations of $0.71m$
and $0.01rad$, respectively. Thus, for a fixed detection threshold
of $4$, this implies a clutter rate of $\lambda_{c}=0.32$. For source
intensity values ${\normalcolor {\normalcolor I_{0}=15,20,25,30,35}}$,
the corresponding detection probabilities (determined empirically
from the data) are ${\normalcolor {\normalcolor P_{D}=0.04,0.15,0.37,0.66,0.87}}$. 

Figures \ref{fig:estimate}, \ref{fig:est} and \ref{fig:estimate-1},
\ref{fig:est-1}, respectively, show typical estimated multi-object
trajectories from superpositional measurements and detection measurements,
for $I_{0}=15$. Observe that the posterior based on superpositional
measurements appears to facilitate more accurate trajectory estimates
in terms of initiation, following and termination. To further validate
these observations, 100 Monte Carlo trials are performed. The OSPA/OSPA\textsuperscript{2}
metrics are employed as a formal distance to assess the error between
the true and estimated multi-object states/trajectories \cite{Schumacher2008},
\cite{Beard18-largescale}. Intuitively, the OSPA\textsuperscript{2}
jointly captures state, cardinality and label errors, and is typically
plotted over a moving time window, in this case of length $10s$.
The OSPA metric is a special case with a window length of 1 which
effectively ignores labeling errors. 
\begin{figure}[t]
\begin{centering}
\resizebox{80mm}{!}{\includegraphics[clip]{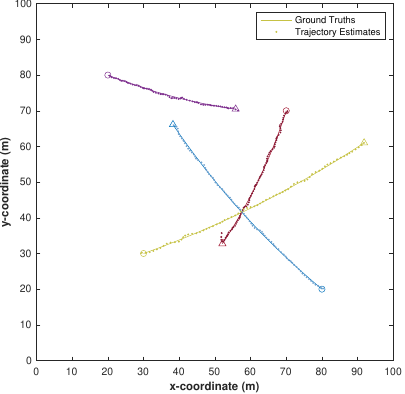}}
\par\end{centering}
\caption{\protect\label{fig:estimate}Multi-object trajectory estimate from
posterior with superpositional measurements, superimposed on ground
truth.}
\vspace{0mm}
\end{figure}
\begin{figure}[t]
\begin{centering}
\resizebox{88mm}{!}{\includegraphics[clip]{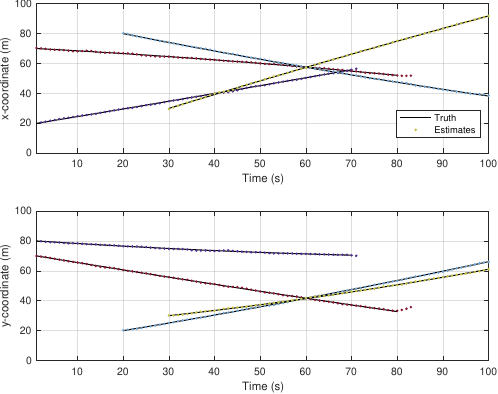}}
\par\end{centering}
\caption{\protect\label{fig:est}Multi-object trajectory estimate from superpositional-measurement-based
posterior versus time.}
\vspace{0mm}
\end{figure}
\begin{figure}[t]
\begin{centering}
\resizebox{80mm}{!}{\includegraphics[clip]{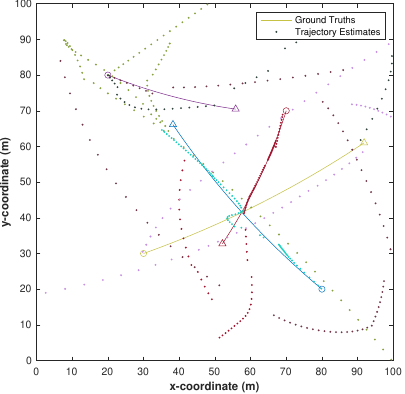}}
\par\end{centering}
\caption{\protect\label{fig:estimate-1}Multi-object trajectory estimate from
GLMB smoother with a simple detector, superimposed on ground truth\textcolor{blue}{.}}
\vspace{0mm}
\end{figure}
\begin{figure}[t]
\begin{centering}
\resizebox{88mm}{!}{\includegraphics[clip]{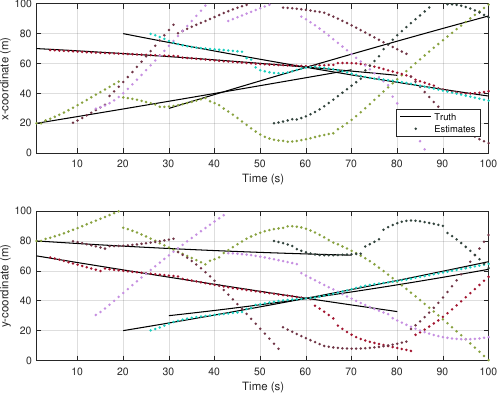}}
\par\end{centering}
\caption{\protect\label{fig:est-1}Estimates from GLMB smoother with a simple
detector versus time.}
\vspace{0mm}
\end{figure}

Figures \ref{fig:ospas_tbd} and \ref{fig:ospa_tad} show the average
OSPA/OSPA\textsuperscript{2} error for the multi-object posterior
Gibbs sampler and detection based GLMB smoother respectively. It can
be readily seen that the image based Gibbs sampler matches or outperforms
the detection based GLMB smoother in all cases presented. Notice that
it is only in the highest SNR setting where the detection based GLMB
smoother produces accurate trajectory estimates. The detection based
GLMB smoother succumbs to label switching earlier, and similarly fails
to produce meaningful trajectory estimation earlier. At lower SNR,
the superpositional measurement based Gibbs sampler significantly
outperforms the detection based GLMB smoother, also as expected. It
is also noted that at lower SNR, the former still produces relatively
accurate position estimates, albeit with more label switching, and
late track initiation/termination. 
\begin{figure}[t]
\begin{centering}
\resizebox{88mm}{!}{\includegraphics[clip]{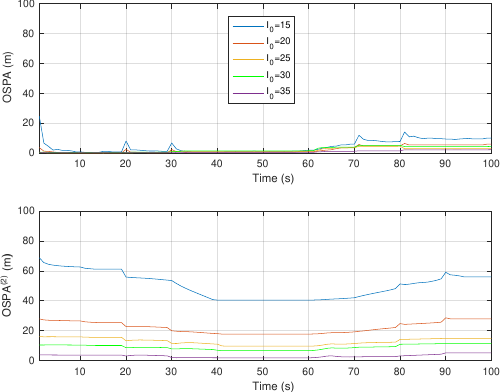}}
\par\end{centering}
\caption{\protect\label{fig:ospas_tbd}Multi-Object posterior Gibbs sampler:
OSPA and OSPA$^{(2)}$ tracking errors (with Euclidean base distance
and cut-off at 100m) versus time for various source levels. The tracking
errors increase as the source levels decrease.}
\vspace{0mm}
\end{figure}
\begin{figure}[t]
\begin{centering}
\resizebox{88mm}{!}{\includegraphics[clip]{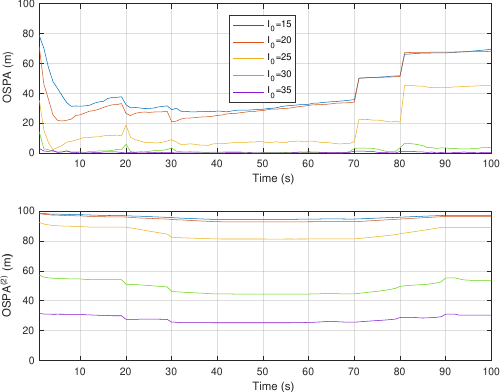}}
\par\end{centering}
\caption{\protect\label{fig:ospa_tad}GLMB smoother with a simple detector:
OSPA and OSPA$^{(2)}$ tracking errors (with Euclidean base distance
and cut-off at 100m) versus time for various source levels. Again,
the tracking errors increase as the source levels decrease. Further,
the tracking errors are significantly higher than that of the multi-object
posterior Gibbs sampler.}
\vspace{0mm}
\end{figure}

These observations can be directly attributed to the Gibbs sampler
fully exploiting the amplitude variations in the image measurement.
The GLMB smoother by nature only considers point detections which
discard significant information in the image. The respective approaches
effectively represent the two ends of the computation versus performance
trade off. At an algorithmic level, the image based Gibbs sampler
breaks a complex multi-scan multi-object image estimation problem
into a sequence of simpler single-scan single-object image estimation
problems. In contrast, the detection based GLMB smoother uses a simpler
and cheaper multi-object data association solver.

\subsection{Posterior Analytics}

Thanks to the labeled representation, beyond trajectory estimation,
the multi-object posterior admits rich analytics inaccessible to point-estimate-based
or unlabeled methods. The multi-object posterior samples provide a
direct and efficient means to compute these analytics. This subsection
presents some illustrative examples, motivated by tasks associated
with situational awareness, which are by no means an exhaustive exploration
of possible analytics.

For a given logical statement or event $\mathcal{E}(\boldsymbol{X}_{0:k})$
concerning a multi-object trajectory $\boldsymbol{X}_{0:k}$, we use
the notation $\boldsymbol{1}(\mathcal{E}(\boldsymbol{X}_{0:k}))=1$
if $\mathcal{E}(\boldsymbol{X}_{0:k})$ is true and zero otherwise.
The posterior probability of the event $\mathcal{E}$ given by
\begin{alignat*}{1}
P(\mathcal{E}) & =\mathbb{E}_{\pi_{0:k}}\left[\boldsymbol{1}(\mathcal{E})\right].
\end{alignat*}
Recall the label projection $\mathcal{L}(\cdot)$ and attribute projection
$\mathcal{A}(\cdot)$, we now denote by $\mathcal{A}_{p}$ and $\mathcal{A}_{v}$
respectively,\textcolor{violet}{{} }the attribute projections pertaining
to position and velocity. 

The first example considers a given time $t$, label set $L$, radius
$\rho$, and the event $\mathcal{E}_{t}(\boldsymbol{X}_{0:k})$, where
at time $t$, objects with labels in $L$ are all located within $\rho$
of each other. Let $d_{p}(\boldsymbol{x},\boldsymbol{y})=\left\Vert \mathcal{A}_{p}(\boldsymbol{x})-\mathcal{A}_{p}(\boldsymbol{y})\right\Vert $
denote the distance between the positions, then 
\begin{alignat*}{1}
\boldsymbol{1}(\mathcal{E}_{t}(\boldsymbol{X}_{0:k}))= & \boldsymbol{1}_{\mathcal{L}(\boldsymbol{X}_{t})}^{L}\boldsymbol{1}\!\left(\max_{\boldsymbol{x}\neq\boldsymbol{y}\in\boldsymbol{X}_{t}}\boldsymbol{1}_{L}^{\mathcal{L}(\{\boldsymbol{x},\boldsymbol{y}\})}d_{p}(\boldsymbol{x},\boldsymbol{y})\leq\rho\right)\!.
\end{alignat*}
For $L=\{(1,2),(20,3),(30,4)\}$, and $\rho=3m$, the posterior probability
of $\mathcal{E}_{t}(\boldsymbol{X}_{0:k})$ is plotted against $t$
on the interval ${\normalcolor {\normalcolor {\normalcolor \{1:k}\}}}$
in Figure \ref{fig:probs1}. Observed that this probability peaks
around $60s$ (and is otherwise low), which is consistent with the
crossing of the 3 trajectories in the ground truth. 

\begin{figure}[t]
\begin{centering}
\resizebox{88mm}{!}{\includegraphics[clip]{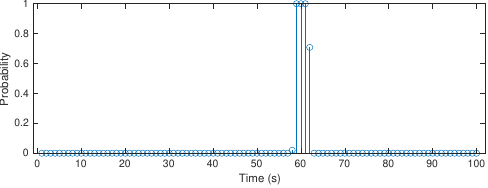}}
\par\end{centering}
\caption{\protect\label{fig:probs1}Posterior probability that the objects
(with labels) $(1,2)$, $(20,3)$, and\textcolor{blue}{{} }$(30,4)$
are all within $3m$ of each other.}
\vspace{0mm}
\end{figure}

The second example considers a given a time $t$, label set $L$,
tolerance $\eta$, and the event $\mathcal{E}_{t}(\boldsymbol{X}_{0:k})$,
where at time $t$, all objects with labels in $L$ have speeds within
$\eta$ of each other. Let $d_{s}(\boldsymbol{x},\boldsymbol{y})=\left|\left\Vert \mathcal{A}_{v}(\boldsymbol{x})\right\Vert -\left\Vert \mathcal{A}_{v}(\boldsymbol{y})\right\Vert \right|$
denote the distance between the speeds, then
\begin{alignat*}{1}
{\normalcolor \boldsymbol{1}(\mathcal{E}_{t}(\boldsymbol{X}_{0:k})}\mathclose{\normalcolor )} & \mathrel{\normalcolor =}{\normalcolor \boldsymbol{1}_{\mathcal{L}(\boldsymbol{X}_{t})}^{L}\boldsymbol{1}\!\left(\max_{\boldsymbol{x}\neq\boldsymbol{y}\in\boldsymbol{X}_{t}}\boldsymbol{1}_{L}^{\mathcal{L}(\{\boldsymbol{x},\boldsymbol{y}\})}d_{s}(\boldsymbol{x},\boldsymbol{y})\leq\eta\right)\!.}
\end{alignat*}
For $L=\{(20,3),(30,4)\}$, and $\eta=0.1m/s$, the posterior probability
of $\mathcal{E}_{t}(\boldsymbol{X}_{0:k})$ is plotted against $t$
on the interval ${\normalcolor {\normalcolor \{1:k\}}}$ in Figure
\ref{fig:probs2}. This probability is elevated after 30s when both
labels are present, but fluctuates due to the large uncertainty on
the speed estimates. 
\begin{figure}[t]
\begin{centering}
\resizebox{88mm}{!}{\includegraphics[clip]{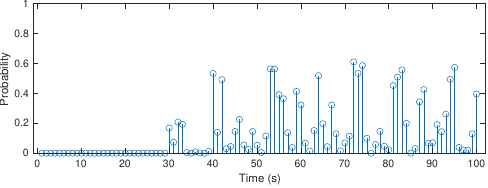}}
\par\end{centering}
\caption{\protect\label{fig:probs2} Posterior probability that objects $(20,3)$
and $(30,4)$ travel at speeds within $0.1m/s$ of each other.}
\vspace{0mm}
\end{figure}

The third example considers a given a time window $T(t)\subseteq\{0:k\}$,
label set $L$, radius $\rho$, angle $\theta$, and the event $\mathcal{E}_{t}(\boldsymbol{X}_{0:k})$,
where for all time in $T(t)$ all objects with labels in $L$ are
located more than $\rho$ from each other, and have differences in
headings greater than $\theta$. Let $d_{h}(\boldsymbol{x},\boldsymbol{y})=\arccos\left(\frac{\left\langle \mathcal{A}_{v}(\boldsymbol{x}),\mathcal{A}_{v}(\boldsymbol{y})\right\rangle }{\left\Vert \mathcal{A}_{v}(\boldsymbol{x})\right\Vert \left\Vert \mathcal{A}_{v}(\boldsymbol{y})\right\Vert }\right)$
denote the distance between the headings, then 
\begin{alignat*}{1}
\boldsymbol{1}(\mathcal{E}_{t}(\boldsymbol{X}_{0:k})) & =\prod_{j\in T(t)}\!\left[\boldsymbol{1}\!\left(\max_{\boldsymbol{x}\neq\boldsymbol{y}\in\boldsymbol{X}_{j}}\boldsymbol{1}_{L}^{\mathcal{L}(\{\boldsymbol{x},\boldsymbol{y}\})}d_{p}(\boldsymbol{x},\boldsymbol{y})>\rho\right)\right.\\
 & \left.\times\boldsymbol{1}\!\left(\max_{\boldsymbol{x}\neq\boldsymbol{y}\in\boldsymbol{X}_{j}}\!\boldsymbol{1}_{L}^{\mathcal{L}(\{\boldsymbol{x},\boldsymbol{y}\})}d_{h}(\boldsymbol{x},\boldsymbol{y})>\theta\right)\boldsymbol{1}_{\mathcal{L}(\boldsymbol{X}_{t})}^{L}\!\right]\!.
\end{alignat*}
For $T(t)=\{t-2,t-1,t\}$, $L=\{(1,2),(30,4)\}$, $\rho=10m$ and
$\theta=60^{\circ}$, the posterior probability of $\mathcal{E}_{t}(\boldsymbol{X}_{0:k})$
is plotted\textcolor{violet}{{} }against $t$ on the interval ${\normalcolor {\normalcolor {\normalcolor \{3:k\}}}}$
in Figure \ref{fig:probs3}. This probability is zero before $33s$
and after $80s$ since both labels do not exist concurrently for the
3 preceding scans. The probability dips between $50s$ and $70s$
due to the trajectory crossing, but is otherwise high as expected.

\begin{figure}[t]
\begin{centering}
\resizebox{88mm}{!}{\includegraphics[clip]{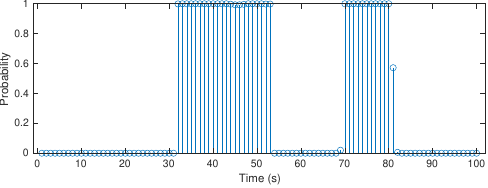}}
\par\end{centering}
\caption{\protect\label{fig:probs3}Posterior probability that for each time
in the window $T(t)=\{t-2,t-1,t\}$, objects $(1,2)$ and $(30,4)$
are more than $10m$ from each other, and have heading difference
of more than $60^{\circ}$.}
\vspace{0mm}
\end{figure}

\section{Conclusions\protect\label{sec:Conclusion}}

This work provides a tractable Gibbs sampling solution for direct
multi-object posterior computation under generic observation models,
including cases where no analytic filtering or smoothing solutions
exist. The proposed approach advances the state-of-the-art beyond
analytic GLMB-based methods, enabling principled Bayesian inference
directly at the posterior level. A key implication of this development
is the first multi-object smoothing solution for superpositional measurements,
potentially closing a long-standing gap between filtering-based TBD
methods and full Bayesian smoothing. From a practical perspective,
the proposed approach enables robust multi-object estimation in low-SNR
regimes where detection-based methods fail. Beyond trajectory estimates,
since the labeled RFS formulation admits rich analytics with intrinsic
uncertainty quantification, the posterior samples provide a direct
and efficient means to realize these analytics.

\section{Appendix\protect\label{sec:Appendix}}

\subsection{Mathematical Proofs\protect\label{sec:Appendix-proof}}

The following Lemmas facilitate the proofs.
\begin{lem}
\label{Lemma:LMB-Constraint-ID} Let $r$ be a function from a discrete
set $\mathbb{S}$ to $[0,1]$, and\textcolor{violet}{{} ${\normalcolor r_{I}(\ell)=r(\ell}\mathclose{\normalcolor )^{1-\mathbf{1}_{I}(\ell)}}$},
 Then for any $I,L\subseteq\mathbb{S}$
\begin{alignat*}{1}
\frac{\mathbf{1}_{L}^{I}\left[1-r\right]^{\mathbb{S}-L}r^{L}}{\sum_{J\subseteq\mathbb{S}}\mathbf{1}_{J}^{I}\left[1-r\right]^{\mathbb{S}-J}r^{J}}= & \mathbf{1}_{L}^{I}\left[1-r_{I}\right]^{\mathbb{S}-L}r_{I}^{L}.
\end{alignat*}
\end{lem}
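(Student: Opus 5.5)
The plan is to compute the denominator in closed form and then compare the two sides label by label. Throughout, assume $\mathbb{S}$ is finite (or that only finitely many labels carry nontrivial factors), so that all products are well defined. Also assume the denominator is nonzero; this requires $r(\ell)>0$ for every $\ell\in I$.

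\emph{Step 1: the denominator.} The indicator $\mathbf{1}_{J}^{I}$ restricts the sum to sets $J\supseteq I$. Writing $J=I\uplus K$ with $K\subseteq\mathbb{S}-I$ gives
\[
\sum_{J\subseteq\mathbb{S}}\mathbf{1}_{J}^{I}\left[1-r\right]^{\mathbb{S}-J}r^{J}=r^{I}\sum_{K\subseteq\mathbb{S}-I}\left[1-r\right]^{(\mathbb{S}-I)-K}r^{K}=r^{I}.
\]
The last equality is the multi-Bernoulli normalization identity quoted after (\ref{eq:LMBexpX}), applied on the ground set $\mathbb{S}-I$; equivalently, it is the expansion of $\prod_{\ell\in\mathbb{S}-I}\bigl((1-r(\ell))+r(\ell)\bigr)=1$.

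\emph{Step 2: the left-hand side.} If $I\not\subseteq L$, both sides vanish because of the common factor $\mathbf{1}_{L}^{I}$. So assume $I\subseteq L$. Since $L-I$ and $I$ are disjoint, $r^{L}=r^{L-I}r^{I}$, and the factor $r^{I}$ cancels with the denominator. The left-hand side therefore equals
\[
\left[1-r\right]^{\mathbb{S}-L}r^{L-I}.
\]

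\emph{Step 3: the right-hand side.} Here we use the definition $r_{I}(\ell)=r(\ell)^{1-\mathbf{1}_{I}(\ell)}$. For $\ell\in I$ we have $r_{I}(\ell)=r(\ell)^{0}=1$, and for $\ell\notin I$ we have $r_{I}(\ell)=r(\ell)$. Split the right-hand side into its two products:
\begin{itemize}
\item On $\mathbb{S}-L$: since $I\subseteq L$, this set is disjoint from $I$, so $r_{I}=r$ there and $\left[1-r_{I}\right]^{\mathbb{S}-L}=\left[1-r\right]^{\mathbb{S}-L}$.
\item On $L$: we get $r_{I}^{L}=1^{I}\,r^{L-I}=r^{L-I}$.
\end{itemize}
Hence the right-hand side is $\mathbf{1}_{L}^{I}\left[1-r\right]^{\mathbb{S}-L}r^{L-I}$, which matches Step 2.

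The main obstacle is the bookkeeping in Step 1, namely the reparametrization $J=I\uplus K$ and recognizing the inner sum as a normalized multi-Bernoulli sum. The zero-denominator degenerate case, where $r(\ell)=0$ for some $\ell\in I$, should be excluded explicitly by assumption.
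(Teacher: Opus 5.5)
Your proposal is correct and follows the paper's proof essentially step for step. You rewrite the denominator via $J=I\uplus K$ and evaluate it to $r^{I}$ by the multi-Bernoulli normalization, dispose of the case $I\not\subseteq L$ trivially, and otherwise cancel $r^{I}$ using $r_{I}=1$ on $I$ and $r_{I}=r$ off $I$. Your explicit assumption that $r(\ell)>0$ for $\ell\in I$, which keeps the denominator nonzero, is a sensible caveat that the paper leaves implicit.
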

\begin{IEEEproof}
Denote $\bar{J}=\mathbb{S}-J$. If $I\subseteq J\subseteq\mathbb{S}$,
then $J\cap I=I$, i.e., $r^{J}=r^{J\cap I}r^{J-I}=r^{J-I}r^{I}$.
Moreover, noting that $\mathbf{1}_{J}^{I}=1$ iff $J\supseteq I$,
we have
\begin{alignat*}{1}
\sum_{J\subseteq\mathbb{S}}\mathbf{1}_{J}^{I}\left[1-r\right]^{\bar{J}}r^{J}= & \sum_{J\supseteq I}\left[1-r\right]^{\bar{J}}r^{J-I}r^{I}\\
= & r^{I}\sum_{U\subseteq\mathbb{S}-I}\left[1-r\right]^{(\mathbb{S}-I)-U}r^{U}\\
= & r^{I}
\end{alignat*}
where: the 2nd line follows by the change of variable $U=J-I$ (i.e.,
the sum over $J\supseteq I$ translates to the sum over $U\subseteq\mathbb{S}-I$,
and $\bar{J}=(\mathbb{S}-I)-U$ because $J=U\uplus I$); and the 3rd
line follows from the multinomial expansion identity $\sum_{U\subseteq S}\left[1-r\right]^{S-U}r^{U}=1$,
see Lemma 1 in \cite{VoVoNguyenShimo-Overview24}.

Further, since the result trivially holds for $I\nsubseteq L$, assume
$I\subseteq L\subseteq\mathbb{S}$. Then $\bar{L}-I=\bar{L}$, and
$\bar{L}\subseteq\bar{I}$. Hence, $r_{I}^{L}=r^{L-I}$, and $\left[1-r_{I}\right]^{\bar{L}}=\left[1-r\right]^{\bar{L}}$,
because $r_{I}(\ell)=r(\ell)$ when $\ell\notin I$ and $r_{I}(\ell)=1$
when $\ell\in I$ . Therefore, 
\begin{alignat*}{1}
\frac{\mathbf{1}_{L}^{I}\left[1-r\right]^{\bar{L}}r^{L}}{\sum_{J\subseteq\mathbb{S}}\mathbf{1}_{J}^{I}\left[1-r\right]^{\bar{J}}r^{J}}= & \frac{\mathbf{1}_{L}^{I}\left[1-r\right]^{\bar{L}}r^{L}}{r^{I}}\\
= & \mathbf{1}_{L}^{I}\left[1-r\right]^{\bar{L}}r^{L-I}\\
= & \mathbf{1}_{L}^{I}\left[1-r_{I}\right]^{\bar{L}}r_{I}^{L}.
\end{alignat*}
\end{IEEEproof}
\begin{lem}
\label{Lemma:LMB-ID-}Let $f$ and $g$ be non-negative functions
on $\mathbb{X}\times\mathbb{S}$ and $\mathbb{S}$, respectively.
Denote $\langle f\rangle(\ell)=\langle f(\cdot,\ell),1\rangle$ and
assuming there is no $\ell\in\mathbb{S}$ such that $\langle f\rangle(\ell)=g(\ell)=0$.
Then 
\begin{alignat*}{1}
\Delta(\boldsymbol{X})\mathbf{1}_{\mathcal{\mathcal{L}}(\boldsymbol{X})}^{I}g^{\mathbb{S}-\mathcal{L}(\boldsymbol{X})}f^{\boldsymbol{X}} & \propto\mathbf{1}_{\mathcal{\mathcal{L}}(\boldsymbol{X})}^{I}\boldsymbol{\lambda}(\boldsymbol{X};r_{I},f_{I}),
\end{alignat*}
where $\boldsymbol{\lambda}(\cdot;r_{I},f_{I})$ is an LMB with parameters:
\begin{align*}
r_{I}(\ell)= & \left\lceil \frac{\langle f\rangle(\ell)}{g(\ell)}\right\rfloor ^{1-\mathbf{1}_{I}(\ell)}\\
f_{I}(\cdot,\ell)= & \frac{f(\cdot,\ell)}{\langle f\rangle(\ell)}.
\end{align*}
\end{lem}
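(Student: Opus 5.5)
The plan is to rewrite the left-hand side so that it has exactly the form to which Lemma \ref{Lemma:LMB-Constraint-ID} applies, with $\mathbb{S}$ playing the role of the label space. Let $L=\mathcal{L}(\boldsymbol{X})$. On the support $\Delta(\boldsymbol{X})=1$ and $L\subseteq\mathbb{S}$, we have $\langle f\rangle(\ell)+g(\ell)>0$ for every $\ell$ by assumption, so we may define
\[
r(\ell)=\left\lceil \frac{\langle f\rangle(\ell)}{g(\ell)}\right\rfloor,\qquad
f_{I}(\cdot,\ell)=\frac{f(\cdot,\ell)}{\langle f\rangle(\ell)}.
\]
If $\langle f\rangle(\ell)=0$, then $f(\cdot,\ell)=0$ almost everywhere and $r(\ell)=0$. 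In that case any probability density may replace $f_I(\cdot,\ell)$, because the term appears only when $\ell\in L$, where it is multiplied by $r(\ell)=0$. The same holds when $f(x,\ell)=0$.

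With $c(\ell)=\langle f\rangle(\ell)+g(\ell)$, we have $g(\ell)=c(\ell)[1-r(\ell)]$ and $f(x,\ell)=c(\ell)\,r(\ell)\,f_I(x,\ell)$. Hence
\[
g^{\mathbb{S}-L}f^{\boldsymbol{X}}=c^{\mathbb{S}}\,[1-r]^{\mathbb{S}-L}\,r^{L}\,f_I^{\boldsymbol{X}}.
\]
Here $c^{\mathbb{S}}$ is a constant independent of $\boldsymbol{X}$, and it is absorbed into the proportionality. The identity uses the distinct-label property: each $\ell\in L$ corresponds to exactly one element of $\boldsymbol{X}$, so $\prod_{\ell\in L}c(\ell)r(\ell)$ splits off from $f^{\boldsymbol{X}}$.

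Next, multiply by $\mathbf{1}_{L}^{I}$ and apply Lemma \ref{Lemma:LMB-Constraint-ID}. That lemma gives
\[
\mathbf{1}_{L}^{I}[1-r]^{\mathbb{S}-L}r^{L}=r^{I}\,\mathbf{1}_{L}^{I}[1-r_I]^{\mathbb{S}-L}r_I^{L},
\]
where $r_I(\ell)=r(\ell)^{1-\mathbf{1}_I(\ell)}$, which matches the stated existence probability. The factor $r^{I}$ depends only on $I$, not on $\boldsymbol{X}$, so it is also a proportionality constant. If $r^{I}=0$, both sides vanish identically and the statement holds trivially. Collecting terms gives
\[
\Delta(\boldsymbol{X})\mathbf{1}_{L}^{I}g^{\mathbb{S}-L}f^{\boldsymbol{X}}\propto \mathbf{1}_{L}^{I}\,\Delta(\boldsymbol{X})[1-r_I]^{\mathbb{S}-L}r_I^{L}f_I^{\boldsymbol{X}}.
\]
The right-hand side is $\mathbf{1}_{L}^{I}\boldsymbol{\lambda}(\boldsymbol{X};r_I,f_I)$ in the form (\ref{eq:LMBexpX}), with $\mathcal{D}(\sigma)=\mathbb{S}$, since each $f_I(\cdot,\ell)$ is a probability density.

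The main obstacle is bookkeeping in the degenerate cases, namely $g(\ell)=0$ (then $r=1$) and $\langle f\rangle(\ell)=0$ (then $r=0$). I would handle these by working throughout with the factorizations $g=c(1-r)$ and $f=c\,r\,f_I$, which hold in every case, rather than with ratios. One more point needs a brief check: the implicit restriction $\mathcal{L}(\boldsymbol{X})\subseteq\mathbb{S}$. It holds because $f$ is defined only on $\mathbb{X}\times\mathbb{S}$, or equivalently by the convention that $f$ vanishes off $\mathbb{S}$, and this matches the domain indicator in the LMB.
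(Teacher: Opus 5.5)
Your proof is correct and follows essentially the same route as the paper. You normalize $f$ to $f_I$, rewrite the expression in terms of $r=\lceil\langle f\rangle/g\rfloor$ with the $\boldsymbol{X}$-independent factor $c^{\mathbb{S}}$ (exactly the paper's $[g/(1-r)]^{\mathbb{S}}$), and then invoke Lemma \ref{Lemma:LMB-Constraint-ID} to trade $r$ for $r_I$ at the cost of the constant $r^{I}$. Your multiplicative factorizations $g=c(1-r)$ and $f=c\,r\,f_I$ are actually slightly cleaner bookkeeping than the paper's ratios $\langle f\rangle/g$ and $g/(1-r)$, which are undefined when $g(\ell)=0$.
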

\begin{IEEEproof}
Let $h_{I}(\boldsymbol{X})=\Delta(\boldsymbol{X})\mathbf{1}_{\mathcal{\mathcal{L}}(\boldsymbol{X})}^{I}g^{\mathbb{S}-\mathcal{L}(\boldsymbol{X})}f^{\boldsymbol{X}}$.
The result holds trivially when $\boldsymbol{X}$ does not have distinct
labels. Assume $\boldsymbol{X}$ has distinct labels, then 
\begin{align*}
h_{I}(\boldsymbol{X})= & \mathbf{1}_{\mathcal{\mathcal{L}}(\boldsymbol{X})}^{I}g^{\mathbb{S}-\mathcal{L}(\boldsymbol{X})}f^{\boldsymbol{X}}\\
= & \mathbf{1}_{\mathcal{\mathcal{L}}(\boldsymbol{X})}^{I}g^{\mathbb{S}-\mathcal{L}(\boldsymbol{X})}\langle f\rangle^{\mathcal{L}(\boldsymbol{X})}f_{I}^{\boldsymbol{X}}\\
= & \mathbf{1}_{\mathcal{\mathcal{L}}(\boldsymbol{X})}^{I}g^{\mathbb{S}}\left[\frac{\langle f\rangle}{g}\right]^{\mathcal{L}(\boldsymbol{X})}f_{I}^{\boldsymbol{X}}.
\end{align*}
Let $r=\left\lceil \frac{\langle f\rangle}{g}\right\rfloor =\frac{\langle f\rangle}{\langle f\rangle+g}$,
then $\langle f\rangle=rg+r\langle f\rangle$, and hence $\frac{\langle f\rangle}{g}=\frac{r}{1-r}$
. Substituting this into the above equation gives
\begin{align*}
h_{I}(\boldsymbol{X})= & \mathbf{1}_{\mathcal{\mathcal{L}}(\boldsymbol{X})}^{I}g^{\mathbb{S}}\left[\frac{r}{1-r}\right]^{\mathcal{L}(\boldsymbol{X})}f_{I}^{\boldsymbol{X}}\\
= & \mathbf{1}_{\mathcal{\mathcal{L}}(\boldsymbol{X})}^{I}\left[\frac{g}{1-r}\right]^{\mathbb{S}}\left[1-r\right]^{\mathbb{S}-\mathcal{L}(\boldsymbol{X})}r^{\mathcal{L}(\boldsymbol{X})}f_{I}^{\boldsymbol{X}}\\
\propto & \left[\frac{g}{1-r}\right]^{\mathbb{S}}\mathbf{1}_{\mathcal{\mathcal{L}}(\boldsymbol{X})}^{I}\left[1-r_{I}\right]^{\mathbb{S}-\mathcal{L}(\boldsymbol{X})}r_{I}^{\mathcal{L}(\boldsymbol{X})}f_{I}^{\boldsymbol{X}},
\end{align*}
where the last line follows from Lemma \ref{Lemma:LMB-Constraint-ID}.
Lemma \ref{Lemma:LMB-Constraint-ID} also implies $\sum_{L\subseteq\mathbb{S}}\mathbf{1}_{L}^{I}\left[1-r_{I}\right]^{\mathbb{S}-L}r_{I}^{L}=1$.
Hence, for $\mathcal{\mathcal{L}}(\boldsymbol{X})\supseteq I$, $\left[1-r_{I}\right]^{\mathbb{S}-\mathcal{L}(\boldsymbol{X})}r_{I}^{\mathcal{L}(\boldsymbol{X})}f_{I}^{\boldsymbol{X}}$
is an LMB, and the desired result holds.
\end{IEEEproof}

\subsubsection{Proof of Proposition \ref{Prop:Transition-Prod-LMB}}

Using (\ref{eq:multiobject-trans}), and keeping in mind that we are
interested in expressing the product as a function of $\boldsymbol{X}$,
\begin{alignat*}{1}
\boldsymbol{f}_{\texttt{+}}(\boldsymbol{X}_{\texttt{+}}|\boldsymbol{X})\boldsymbol{f}(\boldsymbol{X}|\boldsymbol{X}_{\texttt{-}})\\
 & \negthickspace\negthickspace\negthickspace\negthickspace\negthickspace\negthickspace\negthickspace\negthickspace\negthickspace\negthickspace\negthickspace\negthickspace\negthickspace\negthickspace\negthickspace\negthickspace=\boldsymbol{f}_{B,\texttt{+}}(\boldsymbol{X}_{\texttt{+}}^{\mathbb{\mathbb{B}}_{\texttt{+}}})\boldsymbol{f}_{S,\texttt{+}}(\boldsymbol{X}_{\texttt{+}}^{\mathbb{\mathbb{L}}}|\boldsymbol{X})\boldsymbol{f}(\boldsymbol{X}|\boldsymbol{X}_{\texttt{-}})\\
 & \negthickspace\negthickspace\negthickspace\negthickspace\negthickspace\negthickspace\negthickspace\negthickspace\negthickspace\negthickspace\negthickspace\negthickspace\negthickspace\negthickspace\negthickspace\negthickspace\propto\boldsymbol{f}_{S,\texttt{+}}(\boldsymbol{X}_{\texttt{+}}^{\mathbb{\mathbb{L}}}|\boldsymbol{X})\boldsymbol{f}(\boldsymbol{X}|\boldsymbol{X}_{\texttt{-}})\\
 & \negthickspace\negthickspace\negthickspace\negthickspace\negthickspace\negthickspace\negthickspace\negthickspace\negthickspace\negthickspace\negthickspace\negthickspace\negthickspace\negthickspace\negthickspace\negthickspace=\boldsymbol{f}_{S,\texttt{+}}(\boldsymbol{X}_{\texttt{+}}^{\mathbb{\mathbb{L}}}|\boldsymbol{X}^{\mathbb{\mathbb{B}}}\uplus\boldsymbol{X}^{\mathbb{\mathbb{L_{\texttt{-}}}}})\boldsymbol{f}(\boldsymbol{X}^{\mathbb{\mathbb{B}}}\uplus\boldsymbol{X}^{\mathbb{\mathbb{L_{\texttt{-}}}}}|\boldsymbol{X}_{\texttt{-}})\\
 & \negthickspace\negthickspace\negthickspace\negthickspace\negthickspace\negthickspace\negthickspace\negthickspace\negthickspace\negthickspace\negthickspace\negthickspace\negthickspace\negthickspace\negthickspace\negthickspace=\boldsymbol{f}_{S,\texttt{+}}(\boldsymbol{X}_{\texttt{+}}^{\mathbb{\mathbb{L}}}|\boldsymbol{X}^{\mathbb{\mathbb{B}}}\uplus\boldsymbol{X}^{\mathbb{\mathbb{L_{\texttt{-}}}}})\boldsymbol{f}_{B}(\boldsymbol{X}^{\mathbb{\mathbb{B}}})\boldsymbol{f}_{S}(\boldsymbol{X}^{\mathbb{\mathbb{L_{\texttt{-}}}}}|\boldsymbol{X}_{\texttt{-}}).
\end{alignat*}
For readability, we assume $\boldsymbol{X},\boldsymbol{X}_{\texttt{-}},\boldsymbol{X}_{\texttt{+}}^{\mathbb{\mathbb{L}}}$
all have distinct labels, and omit the distinct label indicator. Recall
$\mathcal{L}(\boldsymbol{X}^{\mathbb{\mathbb{B}}})\subseteq\mathcal{L}(\mathbb{B})$,
$\mathcal{L}(\boldsymbol{X}^{\mathbb{\mathbb{L_{\texttt{-}}}}})\subseteq\mathcal{L}(\boldsymbol{X}_{\texttt{-}})$,
and (\ref{eq:birthRFS1})-(\ref{eq:survival-RFS-X-2}), we write
\begin{alignat}{1}
\boldsymbol{f}_{B}(\boldsymbol{X}^{\mathbb{\mathbb{B}}}) & ={\color{red}\mathbf{1}_{\mathbb{B}}^{\mathcal{L}(\boldsymbol{X}^{\mathbb{\mathbb{B}}})}Q_{B}^{\mathbb{B}-\mathcal{L}(\boldsymbol{X}^{\mathbb{\mathbb{B}}})}b^{\boldsymbol{X}^{\mathbb{\mathbb{B}}}}}\nonumber \\
 & ={\color{red}Q_{B}^{\mathbb{B}-\mathcal{L}(\boldsymbol{X}^{\mathbb{\mathbb{B}}})}b^{\boldsymbol{X}^{\mathbb{\mathbb{B}}}}}\label{eq:density-of-birth}\\
\boldsymbol{f}_{S}(\boldsymbol{X}^{\mathbb{\mathbb{L_{\texttt{-}}}}}|\boldsymbol{X}_{\texttt{-}}) & ={\color{blue}\mathbf{1}_{\mathcal{L}(\boldsymbol{X}_{\texttt{-}})}^{\mathcal{L}(\boldsymbol{X}^{\mathbb{\mathbb{L_{\texttt{-}}}}})}Q_{S}^{\boldsymbol{X}_{\texttt{-}}^{\mathcal{L}\left(\boldsymbol{X}_{\texttt{-}}\right)-\mathcal{L}\left(\boldsymbol{X}\right)}}\overset{\shortleftarrow}{f}(\cdot;\boldsymbol{X}_{\texttt{-}}}\mathclose{\color{blue})^{\boldsymbol{X}^{\mathbb{\mathbb{L_{\texttt{-}}}}}}}\nonumber \\
 & ={\color{blue}Q_{S}^{\boldsymbol{X}_{\texttt{-}}^{\mathcal{L}\left(\boldsymbol{X}_{\texttt{-}}\right)-\mathcal{L}\left(\boldsymbol{X}\right)}}\overset{\shortleftarrow}{f}(\cdot;\boldsymbol{X}_{\texttt{-}}}\mathclose{\color{blue})^{\boldsymbol{X}^{\mathbb{\mathbb{L_{\texttt{-}}}}}}}\label{eq:density-of-survival}
\end{alignat}
Further, using (\ref{eq:survival-RFS-Xminus-1}), (\ref{eq:survival-RFS-Xminus-2}),
and\textcolor{blue}{{} }noting that $\mathcal{L}(\boldsymbol{X}_{\texttt{+}}^{\mathbb{\mathbb{L}}})\subseteq\mathcal{\mathcal{L}}(\boldsymbol{X}^{\mathbb{\mathbb{B}}}\uplus\boldsymbol{X}^{\mathbb{\mathbb{L_{\texttt{-}}}}})$
iff $\mathcal{L}(\boldsymbol{X}_{\texttt{+}}^{\mathbb{\mathbb{L}}})\cap\mathbb{B}\subseteq\mathcal{\mathcal{L}}(\boldsymbol{X}^{\mathbb{\mathbb{B}}})$
and $\mathcal{L}(\boldsymbol{X}_{\texttt{+}}^{\mathbb{\mathbb{L}}})\cap\mathbb{\mathbb{L_{\texttt{-}}}}\subseteq\mathcal{\mathcal{L}}(\boldsymbol{X}^{\mathbb{\mathbb{L_{\texttt{-}}}}})$,
we write 
\begin{alignat}{1}
{\normalcolor {\normalcolor {\normalcolor {\normalcolor \boldsymbol{f}_{S,\texttt{+}}(\boldsymbol{X}_{\texttt{+}}^{\mathbb{\mathbb{L}}}|\boldsymbol{X}^{\mathbb{\mathbb{B}}}\uplus\boldsymbol{X}^{\mathbb{\mathbb{L_{\texttt{-}}}}}}\mathclose{\normalcolor )}}}} & {\normalcolor {\normalcolor {\normalcolor {\color{black}}}}}\nonumber \\
{\normalcolor {\normalcolor }} & {\normalcolor {\normalcolor \negthickspace\negthickspace\negthickspace\negthickspace\negthickspace\negthickspace\negthickspace\negthickspace\negthickspace\negthickspace\negthickspace\negthickspace\negthickspace\negthickspace\negthickspace\negthickspace\negthickspace\negthickspace\negthickspace\negthickspace\negthickspace\negthickspace\negthickspace\negthickspace{\normalcolor \mathrel{\normalcolor =}{\normalcolor \mathbf{1}_{\mathcal{\mathcal{L}}(\boldsymbol{X}^{\mathbb{\mathbb{B}}}\uplus\boldsymbol{X}^{\mathbb{\mathbb{L_{\texttt{-}}}}})}^{\mathcal{L}(\boldsymbol{X}_{\texttt{+}}^{\mathbb{\mathbb{L}}})}\overset{\shortrightarrow}{f}(\boldsymbol{X}_{\texttt{+}}^{\mathbb{\mathbb{L}}};\cdot}\mathclose{\normalcolor )^{\boldsymbol{X}^{\mathbb{\mathbb{B}}}\uplus\boldsymbol{X}^{\mathbb{\mathbb{L_{\texttt{-}}}}}}}}}}\nonumber \\
 & \negthickspace\negthickspace\negthickspace\negthickspace\negthickspace\negthickspace\negthickspace\negthickspace\negthickspace\negthickspace\negthickspace\negthickspace\negthickspace\negthickspace\negthickspace\negthickspace\negthickspace\negthickspace\negthickspace\negthickspace\negthickspace\negthickspace\negthickspace\negthickspace={\color{black}{\color{purple}{\color{blue}{\color{red}\mathbf{1}_{\mathcal{\mathcal{L}}(\boldsymbol{X}^{\mathbb{\mathbb{B}}})}^{\mathcal{L}(\boldsymbol{X}_{\texttt{+}}^{\mathbb{\mathbb{L}}})\cap\mathbb{B}}}{\color{red}{\color{red}{\color{blue}\mathbf{1}_{\mathcal{\mathcal{L}}(\boldsymbol{X}^{\mathbb{\mathbb{L_{\texttt{-}}}}})}^{\mathcal{L}(\boldsymbol{X}_{\texttt{+}}^{\mathbb{\mathbb{L}}})\cap\mathbb{\mathbb{L_{\texttt{-}}}}}}}}}}{\color{red}\overset{\shortrightarrow}{f}(\boldsymbol{X}_{\texttt{+}}^{\mathbb{\mathbb{L}}};\cdot}\mathclose{\color{red})^{\boldsymbol{X}^{\mathbb{\mathbb{B}}}}}}{\color{blue}\overset{\shortrightarrow}{f}(\boldsymbol{X}_{\texttt{+}}^{\mathbb{\mathbb{L}}};\cdot}\mathclose{\color{blue})^{\boldsymbol{X}^{\mathbb{\mathbb{L_{\texttt{-}}}}}}}\label{eq:density-of-next-survival}
\end{alignat}
Multiplying (\ref{eq:density-of-birth}), (\ref{eq:density-of-survival}),
(\ref{eq:density-of-next-survival}) and rearranging gives
\begin{alignat*}{1}
\boldsymbol{f}_{\texttt{+}}(\boldsymbol{X}_{\texttt{+}}|\boldsymbol{X})\boldsymbol{f}(\boldsymbol{X}|\boldsymbol{X}_{\texttt{-}})={\color{red}\boldsymbol{f}_{B}^{*}(\boldsymbol{X}^{\mathbb{\mathbb{B}}};\boldsymbol{X}_{\texttt{+}}^{\mathbb{\mathbb{L}}}}\mathclose{\color{red})}{\color{blue}\boldsymbol{f}_{S}^{*}(\boldsymbol{X}^{\mathbb{\mathbb{L_{\texttt{-}}}}};\boldsymbol{X}_{\texttt{+}}^{\mathbb{\mathbb{L}}},\boldsymbol{X}_{\texttt{-}}}\mathclose{\color{blue})}
\end{alignat*}
where\textcolor{black}{
\begin{alignat*}{1}
{\color{blue}{\color{red}\boldsymbol{f}_{B}^{*}(\boldsymbol{X}^{\mathbb{\mathbb{B}}};\boldsymbol{X}_{\texttt{+}}^{\mathbb{\mathbb{L}}})}\mathrel{\color{red}=}} & {\color{blue}{\color{red}{\color{black}{\color{red}{\color{red}\mathbf{1}_{\mathcal{\mathcal{L}}(\boldsymbol{X}^{\mathbb{\mathbb{B}}})}^{\mathcal{L}(\boldsymbol{X}_{\texttt{+}}^{\mathbb{\mathbb{L}}})\cap\mathbb{B}}}Q_{B}^{\mathbb{B}-\mathcal{L}(\boldsymbol{X}^{\mathbb{\mathbb{B}}})}\!}\mathinner{\color{red}\left[\overset{\shortrightarrow}{f}(\boldsymbol{X}_{\texttt{+}}^{\mathbb{\mathbb{L}}};\cdot)b(\cdot)\right]^{\boldsymbol{X}^{\mathbb{\mathbb{B}}}}}}}}\\
{\color{blue}{\color{blue}\boldsymbol{f}_{S}^{*}(\boldsymbol{X}^{\mathbb{\mathbb{L_{\texttt{-}}}}};\boldsymbol{X}_{\texttt{+}}^{\mathbb{\mathbb{L}}},\boldsymbol{X}_{\texttt{-}})}\mathrel{\color{blue}=}} & {\color{blue}{\color{red}{\color{blue}{\color{red}{\color{red}{\color{blue}\mathbf{1}_{\mathcal{\mathcal{L}}(\boldsymbol{X}^{\mathbb{\mathbb{L_{\texttt{-}}}}})}^{\mathcal{L}(\boldsymbol{X}_{\texttt{+}}^{\mathbb{\mathbb{L}}})\cap\mathbb{\mathbb{L_{\texttt{-}}}}}Q_{S}^{\boldsymbol{X}_{\texttt{-}}^{\mathcal{L}\left(\boldsymbol{X}_{\texttt{-}}\right)-\mathcal{L}\left(\boldsymbol{X}\right)}}}}}}}}\\
\mathbin{\color{blue}\times} & \mathinner{\color{blue}\left[\overset{\shortrightarrow}{f}(\boldsymbol{X}_{\texttt{+}}^{\mathbb{\mathbb{L}}};\cdot)\overset{\shortleftarrow}{f}(\cdot;\boldsymbol{X}_{\texttt{-}})\right]^{{\color{red}\!}\boldsymbol{X}^{\mathbb{\mathbb{L_{\texttt{-}}}}}}}
\end{alignat*}
}Using Lemma \ref{Lemma:LMB-ID-}\textcolor{black}{
\begin{alignat*}{1}
{\color{teal}{\normalcolor \boldsymbol{f}_{B}^{*}(\boldsymbol{X}^{\mathbb{\mathbb{B}}};\boldsymbol{X}_{\texttt{+}}^{\mathbb{\mathbb{L}}}}\mathclose{\normalcolor )}} & \mathrel{\normalcolor \propto}{\color{black}{\color{red}\mathbf{1}_{\mathcal{\mathcal{L}}(\boldsymbol{X}^{\mathbb{\mathbb{B}}})}^{\mathcal{L}(\boldsymbol{X}_{\texttt{+}}^{\mathbb{\mathbb{L}}})\cap\mathbb{B}}\boldsymbol{\lambda}(\boldsymbol{X}^{\mathbb{\mathbb{B}}};r_{\boldsymbol{X}_{\texttt{+}}},p_{\boldsymbol{X}_{\texttt{+}}}}\mathclose{\color{red})}}\\
{\color{teal}{\normalcolor \boldsymbol{f}_{S}^{*}(\boldsymbol{X}^{\mathbb{\mathbb{L_{\texttt{-}}}}};\boldsymbol{X}_{\texttt{+}}^{\mathbb{\mathbb{L}}},\boldsymbol{X}_{\texttt{-}}}\mathclose{\normalcolor )}} & \mathrel{\normalcolor \propto}{\color{red}{\color{red}{\color{blue}\mathbf{1}_{\mathcal{\mathcal{L}}(\boldsymbol{X}^{\mathbb{\mathbb{L_{\texttt{-}}}}})}^{\mathcal{L}(\boldsymbol{X}_{\texttt{+}}^{\mathbb{\mathbb{L}}})\cap\mathbb{\mathbb{L_{\texttt{-}}}}}\boldsymbol{\lambda}(\boldsymbol{X}^{\mathbb{\mathbb{L_{\texttt{-}}}}};r_{\boldsymbol{X}_{\texttt{+}},\boldsymbol{X}_{\texttt{-}}},p_{\boldsymbol{X}_{\texttt{+}},\boldsymbol{X}_{\texttt{-}}}).}}}
\end{alignat*}
}where
\begin{alignat*}{1}
{\normalcolor {\normalcolor {\normalcolor {\normalcolor {\normalcolor {\normalcolor {\normalcolor {\normalcolor {\normalcolor r_{\boldsymbol{X}_{\texttt{\textup{+}}}}(\ell}\mathclose{\normalcolor )}}}}}}}}} & {\normalcolor {\normalcolor {\normalcolor {\normalcolor {\normalcolor {\normalcolor {\normalcolor \mathrel{\normalcolor =}}}{\normalcolor \mathinner{\normalcolor \left\lceil \!\frac{\langle\overset{\shortrightarrow}{f}(\boldsymbol{X}_{\texttt{\textup{+}}}^{\mathbb{\mathbb{L}}};\cdot)b(\cdot)\rangle(\ell)}{Q_{B}(\ell)}\!\right\rfloor }^{{\normalcolor 1-\mathbf{1}_{\mathcal{L}(\boldsymbol{X}_{\texttt{\textup{+}}}^{\mathbb{\mathbb{L}}})\cap\mathbb{B}}(\ell}\mathclose{\normalcolor )}}}}\mathpunct{\normalcolor ,}}}}}\\
{\normalcolor {\normalcolor {\normalcolor {\normalcolor {\normalcolor {\normalcolor p_{\boldsymbol{X}_{\texttt{\textup{+}}}}(x,\ell}\mathclose{\normalcolor )}}}}}} & {\normalcolor {\normalcolor {\normalcolor {\normalcolor {\normalcolor \mathrel{\normalcolor =}}\frac{\overset{\shortrightarrow}{f}(\boldsymbol{X}_{\texttt{\textup{+}}}^{\mathbb{\mathbb{L}}};x,\ell)b(x,\ell)}{\langle\overset{\shortrightarrow}{f}(\boldsymbol{X}_{\texttt{\textup{+}}}^{\mathbb{\mathbb{L}}};\cdot)b(\cdot)\rangle(\ell)}}\mathpunct{\normalcolor ,}}}}\\
{\normalcolor {\normalcolor {\normalcolor {\normalcolor {\normalcolor {\normalcolor r_{\boldsymbol{X}_{\texttt{\textup{+}}},\boldsymbol{X}_{\texttt{-}}}(\ell}\mathclose{\normalcolor )}}}}}} & {\normalcolor {\normalcolor {\normalcolor {\normalcolor {\normalcolor \mathrel{\normalcolor =}}\left\lceil \!\frac{\langle\overset{\shortrightarrow}{f}(\boldsymbol{X}_{\texttt{\textup{+}}}^{\mathbb{\mathbb{L}}};\cdot)\overset{\shortleftarrow}{f}(\cdot;\boldsymbol{X}_{\texttt{-}})\rangle(\ell)}{Q_{S}^{\boldsymbol{X}_{\texttt{-}}^{\{\ell\}}}}\!\right\rfloor ^{1-\mathbf{1}_{\mathcal{L}(\boldsymbol{X}_{\texttt{\textup{+}}}^{\mathbb{\mathbb{L}}})\cap\mathbb{\mathbb{L_{\texttt{-}}}\!}}(\ell)}}\mathpunct{\normalcolor ,}}}}\\
{\normalcolor {\normalcolor {\normalcolor {\normalcolor {\normalcolor p_{\boldsymbol{X}_{\texttt{\textup{+}}},\boldsymbol{X}_{\texttt{-}}}(x,\ell}\mathclose{\normalcolor )}}}}} & {\normalcolor {\normalcolor {\normalcolor {\normalcolor \mathrel{\normalcolor =}\frac{\overset{\shortrightarrow}{f}(\boldsymbol{X}_{\texttt{\textup{+}}}^{\mathbb{\mathbb{L}}};x,\ell)\overset{\shortleftarrow}{f}(x,\ell;\boldsymbol{X}_{\texttt{-}})}{\langle\overset{\shortrightarrow}{f}(\boldsymbol{X}_{\texttt{\textup{+}}}^{\mathbb{\mathbb{L}}};\cdot)\overset{\shortleftarrow}{f}(\cdot;\boldsymbol{X}_{\texttt{-}})\rangle(\ell)}.}}}}
\end{alignat*}
Noting that for $\boldsymbol{\lambda}(\boldsymbol{X}^{\mathbb{\mathbb{B}}};r_{\boldsymbol{X}_{\texttt{+}}},p_{\boldsymbol{X}_{\texttt{+}}})$
we only need to evaluate $r_{\boldsymbol{X}_{\texttt{+}}}(\ell)$
at $\ell\in\mathcal{L}(\boldsymbol{X}^{\mathbb{\mathbb{B}}})$, where
$\ell\notin\mathbb{\mathbb{\mathbb{B}}}_{+}$ and hence $\mathbf{1}_{\mathcal{L}(\boldsymbol{X}_{\texttt{\textup{+}}}^{\mathbb{\mathbb{L}}})\cap\mathbb{B}}(\ell)=\mathbf{1}_{\mathcal{L}(\boldsymbol{X}_{\texttt{\textup{+}}}^{\mathbb{\mathbb{L}}})}(\ell)=|\boldsymbol{X}_{\texttt{\textup{+}}}^{\{\ell\}}|$.
Similarly, for $\boldsymbol{\lambda}(\boldsymbol{X}^{\mathbb{\mathbb{L_{\texttt{-}}}}};r_{\boldsymbol{X}_{\texttt{+}},\boldsymbol{X}_{\texttt{-}}},p_{\boldsymbol{X}_{\texttt{+}},\boldsymbol{X}_{\texttt{-}}})$
we only need to evaluate $r_{\boldsymbol{X}_{\texttt{\textup{+}}},\boldsymbol{X}_{\texttt{-}}}(\ell)$
at $\ell\in\mathcal{L}(\boldsymbol{X}^{\mathbb{\mathbb{L_{\texttt{-}}}}})$,
where $\ell\notin\mathbb{\mathbb{\mathbb{B}}}_{+}$ and hence $\mathbf{1}_{\mathcal{L}(\boldsymbol{X}_{\texttt{\textup{+}}}^{\mathbb{\mathbb{L}}})\cap\mathbb{\mathbb{L_{\texttt{-}}}\!}}(\ell)=\mathbf{1}_{\mathcal{L}(\boldsymbol{X}_{\texttt{\textup{+}}}^{\mathbb{\mathbb{L}}})}(\ell)=|\boldsymbol{X}_{\texttt{\textup{+}}}^{\{\ell\}}|$.
Further, since $\langle fg\rangle(\ell)=\langle f(\cdot,\ell),g(\cdot,\ell)\rangle$,
we have the desired result. $\blacksquare$

\subsubsection{Proof of Proposition \ref{Prop:Bernoulli-Conditional}}

Noting that $\boldsymbol{\lambda}(\boldsymbol{X};r,p)=\Delta(\boldsymbol{X})r^{\mathcal{L}(\boldsymbol{X})}\left[1-r\right]^{\mathbb{L}-\mathcal{L}(\boldsymbol{X})}p{}^{\boldsymbol{X}}$,
and let
\begin{alignat*}{1}
K_{\ell}(\boldsymbol{X}^{\{\bar{\ell}\}}) & =\Delta(\boldsymbol{X}^{\{\bar{\ell}\}})r^{\mathcal{L}(\boldsymbol{X}^{\{\bar{\ell}\}})}\left[1-r\right]^{\mathbb{L}-\mathcal{L}(\boldsymbol{X}^{\{\bar{\ell}\}})-\{\ell\}}p^{\boldsymbol{X}^{\{\bar{\ell}\}}},
\end{alignat*}
we have
\begin{alignat*}{1}
K_{\ell}(\boldsymbol{X}^{\{\bar{\ell}\}})(1-r(\ell)) & =\boldsymbol{\lambda}(\boldsymbol{X}^{\{\bar{\ell}\}};r,p)\\
K_{\ell}(\boldsymbol{X}^{\{\bar{\ell}\}})r(\ell)p(x,\ell) & =\boldsymbol{\lambda}(\{(x,\ell)\}\uplus\boldsymbol{X}^{\{\bar{\ell}\}};r,p).
\end{alignat*}
Further, let $C_{\ell}(\boldsymbol{X}^{\{\bar{\ell}\}})=\int\boldsymbol{\pi}(\boldsymbol{Z}^{\{\ell\}}\uplus\boldsymbol{X}^{\{\bar{\ell}\}})\delta\boldsymbol{Z}^{\{\ell\}},$
then
\begin{alignat*}{1}
C_{\ell}(\boldsymbol{X}^{\{\bar{\ell}\}})\\
 & \negthickspace\negthickspace\negthickspace\negthickspace\negthickspace\negthickspace\negthickspace\negthickspace\negthickspace\negthickspace\negthickspace\negthickspace\negthickspace\negthickspace\negthickspace\negthickspace=\boldsymbol{\pi}(\boldsymbol{X}^{\{\bar{\ell}\}})+\bigl\langle\boldsymbol{\pi}(\{\cdot\}\uplus\boldsymbol{X}^{\{\bar{\ell}\}})\bigr\rangle(\ell)\\
 & \negthickspace\negthickspace\negthickspace\negthickspace\negthickspace\negthickspace\negthickspace\negthickspace\negthickspace\negthickspace\negthickspace\negthickspace\negthickspace\negthickspace\negthickspace\negthickspace=K_{\ell}(\boldsymbol{X}^{\{\bar{\ell}\}\negthinspace})\negthinspace\negthinspace\left[\negthinspace(1\negthinspace-\negthinspace r(\ell))\phi(\boldsymbol{X}^{\{\bar{\ell}\}\negthinspace})\negthinspace+\negthinspace r(\ell)\bigl\langle p\phi(\{\negthinspace(\cdot)\negthinspace\}\negthinspace\uplus\negthinspace\boldsymbol{X}^{\{\bar{\ell}\}\negthinspace})\bigr\rangle(\ell)\negthinspace\right].
\end{alignat*}
Hence, 
\begin{alignat*}{1}
\boldsymbol{\pi}(\{(x,\ell)\}|\boldsymbol{X}^{\{\bar{\ell}\}})\\
 & \negthickspace\negthickspace\negthickspace\negthickspace\negthickspace\negthickspace\negthickspace\negthickspace\negthickspace\negthickspace\negthickspace\negthickspace\negthickspace\negthickspace\negthickspace\negthickspace=\boldsymbol{\pi}(\{(x,\ell)\}\uplus\boldsymbol{X}^{\{\bar{\ell}\}})/C_{\ell}(\boldsymbol{X}^{\{\bar{\ell}\}})\\
 & \negthickspace\negthickspace\negthickspace\negthickspace\negthickspace\negthickspace\negthickspace\negthickspace\negthickspace\negthickspace\negthickspace\negthickspace\negthickspace\negthickspace\negthickspace\negthickspace=\frac{\boldsymbol{\lambda}(\{(x,\ell)\}\uplus\boldsymbol{X}^{\{\bar{\ell}\}};r,p)\phi(\{(x,\ell)\}\uplus\boldsymbol{X}^{\{\bar{\ell}\}})}{C_{\ell}(\boldsymbol{X}^{\{\bar{\ell}\}})}\\
 & \negthickspace\negthickspace\negthickspace\negthickspace\negthickspace\negthickspace\negthickspace\negthickspace\negthickspace\negthickspace\negthickspace\negthickspace\negthickspace\negthickspace\negthickspace\negthickspace=\frac{r(\ell)p(\ell,x)\phi(\{(x,\ell)\}\uplus\boldsymbol{X}^{\{\bar{\ell}\}})}{(1-r(\ell))\phi(\boldsymbol{X}^{\{\bar{\ell}\}})+r(\ell)\bigl\langle p(\cdot)\phi(\{(\cdot)\}\uplus\boldsymbol{X}^{\{\bar{\ell}\}})\bigr\rangle(\ell)}.
\end{alignat*}
Since $r(\ell|\boldsymbol{X}^{\{\bar{\ell}\}})=\bigl\langle\boldsymbol{\pi}(\{(\cdot)\}|\boldsymbol{X}^{\{\bar{\ell}\}})\bigr\rangle(\ell)$,
and $p(x,\ell|\boldsymbol{X}^{\{\bar{\ell}\}})=\boldsymbol{\pi}(\{(x,\ell)\}|\boldsymbol{X}^{\{\bar{\ell}\}})/r(\ell|\boldsymbol{X}^{\{\bar{\ell}\}})$,
we have the desired result. $\blacksquare$

\subsection{Particle Flow\protect\label{subsec:Particle-Flow}}

Unlike Metropolis--Hastings, which requires long Markov chains and
proposal tuning \cite{Andrieu2003,RobertCasella2004}, or sequential
Monte Carlo, which may suffer from degeneracy in high dimensions \cite{Doucet01,Cappe07,DelMoral2006SMC,DoucetTutorial09},
Particle Flow when applicable, can avoid rejection steps, reduce variance,
and lower the number of required samples. A detailed overview is given
in \cite{DaiDaum2023StochasticPF}, while the original idea can be
traced back to the work by Daum in \cite{DaumHuang2007LogHomotopy}.

The key idea of Particle Flow is to directly transport samples along
a homotopy path from the prior to the posterior through a homotopy-based
evolution, defined by
\[
p_{z}(x;\tau)=\frac{p(x)[g(z|x)]^{\tau}}{K(\tau)},
\]
where $\tau\in[0,1]$ is a continuous homotopy variable, and $K(\tau)=\int p(x)g^{\tau}(z|x)dx$
is the normalizing constant. As $\tau$ varies from $0$ to $1$,
the prior $p_{z}(\cdot;0)$ transforms smoothly to the posterior $p_{z}(\cdot;1)$.
This transformation can be realized by modeling the Particle Flow
as a stochastic differential equation (SDE), or Itô process:
\[
dx=f_{z}(x,\tau)d\tau+q_{z}(x,\tau)dw_{\tau},
\]
that obeys the Fokker-Plank or Kolmogorov forward equation
\[
\frac{\partial p_{z}}{\partial\tau}=\triangledown\cdot(f_{z}p_{z})+\frac{1}{2}\nabla^{2}(Q_{z}p_{z}),
\]
where $f_{z}(\cdot,\tau)$ is the drift vector field in {\small$\mathbb{X}$},
$w_{\tau}$ is a standard Brownian motion or Wiener process in {\small$\mathbb{X}$}
with $\mathbb{E}[dw_{\tau}dw_{\tau}^{T}]=I_{n}d\tau$, $q_{z}(x,\tau)$
is a diffusion matrix, and $Q_{z}(x,\tau)=q_{z}(x,\tau)q_{z}^{T}(x,\tau)$
is the diffusion covariance. 

The Fokker-Plank SDE formulation provides flexibility for incorporating
different design criteria into the flow, with the central objective
of determining a suitable drift $f_{z}$ and diffusion covariance
$Q_{z}$ (which is a symmetric positive semidefinite matrix). Solutions
to the Particle Flow problem---though generally non-unique---ensure
the state $x$ at (homotopy) time $\tau$ has density $p_{z}(x,\tau)$.
Suppose that $x(0)$ is a sample from the prior $p(\cdot)=p_{z}(\cdot;0)$,
and is the initial condition for the flow, then $x(1)=\int_{0}^{1}f_{z}(x,\tau)d\tau+\int_{0}^{1}q_{z}(x,\tau)dw_{\tau}$
is a sample from the posterior $p_{z}(\cdot;1)$. Integration can
be carried out numerically, for example with an Euler-Maruyama scheme,
or any other suitable variant for SDEs.

Among the many Particle Flow implementations, a notable development
is the Gromov flow \cite{DaumHuangNoushin2018Gromov}, later generalized
into a more efficient scheme \cite{DaumDaiHuangNoushin2024Bulletproofing},
using
\begin{eqnarray*}
f_{z} & = & \left[\triangledown^{2}l_{p,z}\right]^{-1}\left[-\triangledown^{T}l_{g,z}+K_{z}\triangledown^{T}l_{p,z}\right],\\
K_{z} & = & \frac{1}{2}\left[\triangledown^{2}l_{p,z}\right]Q_{z}+\frac{1}{2}\left[\triangledown^{2}l_{g,z}\right]\left[\triangledown^{2}l_{p,z}\right]^{-1},\\
Q_{z} & = & -2\sqrt{\left|tr\left(\triangledown f_{z}^{*}\right)\right|/\mu}\left[\triangledown^{2}l_{p,z}\right]^{-1},
\end{eqnarray*}
where $l_{p,z}=log\left(p_{z}\right)$, $l_{p}=log\left(p\right)$,
$l_{g,z}=log\left(g(z|\cdot)\right)$, $\triangledown^{2}l_{p,z}=\triangledown^{2}l_{p}+\lambda\triangledown^{2}l_{g,z}$
, $f_{z}^{*}$ is the flow $f_{z}$ with $K_{z}=\boldsymbol{0}$,
and $\mu>0$ is a user parameter that controls the truncation error.
Notice that the construction of the Particle Flow is independent of
the normalizing constant $K(\tau)$. 

\bibliographystyle{IEEEtran}
\bibliography{reflib}

\end{document}